\let\csname equation*\endcsname\relax
\let\csname endequation*\endcsname\relax
\definecolor{John}{rgb}{0.0, 0.44, 0.0}
\definecolor{Franco}{rgb}{0.1, 0.0, 0.9}
\newtheorem{theorem}{Theorem}
\newtheorem{corollary}[theorem]{Corollary}
\newtheorem{definition}[theorem]{Definition}
\newtheorem{example}[theorem]{Example}
\newtheorem{lemma}[theorem]{Lemma}
\newtheorem{proposition}[theorem]{Proposition}
\newtheorem{remark}[theorem]{Remark}
\newenvironment{proof}[1][Proof]{\textbf{#1.} }{\ \rule{0.5em}{0.5em}}
\newcommand{\Eye}{\hbox{1\kern-4truept 1}}
\begin{document}

\title[]{Mathematical Models of Markovian Dephasing}
\author{Franco Fagnola$^1$, John E.~Gough$^2$, Hendra I.~Nurdin$^3$, \\ and Lorenza Viola$^4$}

\address{$^1$ Dipartimento di Matematica, Politecnico di Milano, Piazza L. da Vinci 32, 
20133 Milano, Italy}

\address{$^2$ Department of Physics, Aberystwyth University, SY23 3BZ, Wales, 
UK}

\address{$^3$ School of Electrical Engineering and Telecommunications, University of  New South Wales, 
Sydney NSW 2052, Australia }

\address{$^4$ Department of Physics and Astronomy, Dartmouth College, 6127 Wilder 
Laboratory, Hanover, NH 03755, USA }

\ead{franco.fagnola@polimi.it, jug@aber.ac.uk, h.nurdin@unsw.edu.au, lorenza.viola@dartmouth.edu}

\date{\today }

\begin{abstract}
We develop a notion of dephasing under the action of a quantum Markov semigroup in terms of convergence of operators to a block-diagonal form determined by irreducible invariant subspaces. If the latter are all one-dimensional, we say the dephasing is {\em maximal}. With this definition, we show that a key necessary requirement on the Lindblad generator is bistochasticity, and focus on characterizing whether a maximally dephasing evolution may be described in terms of a unitary dilation with only classical noise, as opposed to a genuine non-commutative Hudson-Parthasarathy dilation. To this end, we make use of a seminal result of K\"{u}mmerer and Maassen on the class of commutative dilations of quantum Markov semigroups. In particular, we introduce an intrinsic quantity constructed from the generator, the {\em Hamiltonian obstruction}, which vanishes if and only if the latter admits a self-adjoint representation and quantifies the hindrance to having a classical diffusive noise model.
\end{abstract}

\noindent
{\small{\it Keywords:} Decoherence, quantum Markov semigroups, quantum stochastic calculus }

\section{Introduction}

The phenomenon of decoherence describes the loss of quantum coherence over time, and the resulting transition from pure quantum states to classical statistical mixtures, due to the interaction of an open system with its environment, which may physically represent unobserved or otherwise uninteresting degrees of freedom, or a measurement apparatus \cite{Decoherence}. In this work, we focus on {\em continuous-time} Markovian quantum dynamics, described by a quantum Markov semigroup (QMS), with the corresponding generator in Gorini-Kossakowski-Sudarshan-Lindblad canonical form \cite{GKS76,Lindblad76}. There has been a renewed interest in understanding the extent to which decoherence may be described purely in terms of random unitary dynamics arising from classical, {\em commutative} noise models (physically akin to fluctuating external fields), with the goals of both shedding light on fundamentally non-classical dynamical aspects and possibly obtaining computationally more tractable models. In fact, the class of QMS generators arising from commutative dilations was completely determined by K\"{u}mmerer and Maassen \cite{KM87} as far back as 1987. 

Our main aim here, in particular, is to discuss the above question for the simplest yet important scenario of decoherence, namely, one where the coherence decay occurs in a dissipationless fashion, without being accompanied by unrecoverable energy exchange with the environment - often referred to as a ``pure $T_2$-process'' in physics terminology or simply as (pure) {\em dephasing}. Since there are several competing mathematical definitions of dephasing in the literature (see for instance \cite{BNII,AFGG12,AFG13}), our first step is to make the notion of Markovian dephasing more precise.  Our formulation is closest to the one in \cite{BNII}; in the case of {\em maximal} dephasing, it leads to the concept of a {\em stable basis} that recovers the ``pointer basis'' introduced by Zurek \cite{Zurek} and also embodies the simplest information-preserving structure \cite{IPS}. For more recent representative contributions on decoherence through random unitary models see for instance \cite{Abel_Marquardt08,Classical,Strunz,Nori}.

We leverage as a main tool the theory of unitary dilations of a QMS, developed in the context of quantum stochastic calculus by Hudson and Parthasarathy \cite{HP84,Partha92}. Specifically, Hudson and Parthasarathy gave an explicit unitary dilation theory using Fock-space-based environments for QMSs. Their quantum stochastic calculus is based on analogue of the It\={o} calculus for integrals with respect to creation, annihilation and gauge processes, and contains classical situations as a special, commutative case. In our context, the relevant question becomes to characterize the dephasing QMSs that are ``truly quantum,'' in the sense that they actually need the full Hudson-Parthasarathy theory: more precisely, those QMSs that {\em cannot} be described as a unitary dilation using {\em only} classical, commutative noise processes. This is where the K\"{u}mmerer and Maassen Theorem \cite{KM87} enters. They studied QMSs on finite-dimensional Hilbert spaces that admit a dilation to a unitary stochastic evolution with classical noise (referred to as ``essentially commutative Markov dilations'') and gave a characterization of the Lindblad generator of such semigroups (\cite[Theorem 1.1.1]{KM87}). These turned out to be the semigroups driven by classical noises that are diffusive (in the form of Wiener processes), or of the jump type (in the form of Poisson processes), or a combination of such noises. Thus, the problem of characterizing the type of decoherence that may ensue from classical noise ultimately comes down to studying the K\"{u}mmerer and Maassen class.  

We begin our analysis by introducing the required background on QMSs and quantum stochastic differential equations (QSDEs) and by discussing some paradigmatic low-dimensional examples (Sec.\S\ref{sec:backg}).  In Sec.\S\ref{sec:deph} we make our notions of dephasing and maximal dephasing mathematically precise (Definition \ref{def:dephasing}) and characterize, in particular, maximal dephasing QMSs as being {\em diagonal} in the stable basis (Theorem \ref{thm:dephasing}). In Sec.\S\ref{sec:obs} we introduce the concept of {\em Hamiltonian obstruction} associated to a dephasing QMS, and show that its vanishing is equivalent to the existence of a representation of the QMS generator involving only {\em self-adjoint} coupling operators (Theorems \ref{thm:obstruct_normal} \& \ref{thm:no_obstr_iff_selfadjoint}). In Sec.\S\ref{sec:class} we bring those tools to bear on the problem of characterizing essentially commutative dilations of maximally dephasing QMS (Theorem \ref{thm:ess_class}). As a main result, we find that vanishing of the Hamiltonian obstruction is necessary and sufficient for a {\em diffusive} classical dilation to exist, whereas a non-zero obstruction may still be compatible with the existence of a classical dilation that involves Poisson noise processes.

\section{Background}
\label{sec:backg}

For convenience, we take the Hilbert space of the system of interest to be $\mathfrak{h} = \mathbb{C}^N$. The Heisenberg picture form of a QMS consists of a family $\Phi =\{ \Phi_t :t \ge 0\}$ of completely positive maps which are conservative, namely,
$\Phi_t (\Eye_N) =\Eye_N$, $\forall t \ge0$. The standard representation of the
generator \cite{Lindblad76} reads (in units $\hbar=1$):
\begin{eqnarray}
\mathscr{L} X = \frac{1}{2} \sum_{k=1}^d [ L^\ast_k , X] L_k  +  \frac{1}{2} \sum_{k=1}^d  L^\ast_k [ X, L_k] - i [X,H],
\label{eq:GKSL}
\end{eqnarray}
where the operator $H$ is self-adjoint. We will restrict to the special case where the index, $k$, ranges over a finite set, say, $k\in \left\{ 1,\ldots ,d\right\} $: the coupling (or Lindblad) operators, $\mathbf{L}=\left\{ L_{k}\right\} $, are bounded by assumption of a finite-dimensional Hilbert space. The Schr\"{o}dinger-picture version consists of the semigroup of dual maps, $\Phi_t^\star$, and the density matrix evolves as $\rho_t = \Phi_t^\star (\rho_0 )$\footnote{Since in this work we consider exclusively finite-dimensional Hilbert spaces and $\ast$-algebras of matrices, we identify density operators $\rho$ with states on the $\ast$-algebra, by letting $\mathbb{E}[\cdot]={\rm tr}\{\rho\, \cdot\}$. In the general infinite-dimensional case, one would  work with von Neumann algebras and density operators would correspond to normal (i.e., weak $\ast$-continuous) states of the von Neumann algebra.}. This leads to the QMS master equation $\dot {\rho}(t) = \mathscr{L}^\star (\rho(t))$, where the dual generator is 
\begin{eqnarray}
\mathscr{L}^\star \rho =   \sum_{k=1}^d L_k  \rho L_k^\ast  
- \frac{1}{2} \sum_{k=1}^d  \big( L^\ast_kL_k  \rho+ \rho L^\ast_k L_k \big)
+i [ \rho ,H].
\label{eq:GKSL^star}
\end{eqnarray}

It is well known that the Heisenberg representation of the generator given in (\ref{eq:GKSL}) is not unique \cite{Lindblad76} (see also Theorem \ref{thm:equiv} below regarding the degree of freedom in choosing the operators $L_k$ and $H$). However, once fixed, the Schr\"{o}dinger version will inherit the representation (\ref{eq:GKSL^star}) by duality.

\begin{definition}
The representation (\ref{eq:GKSL}) is \textbf{minimal} if the number $d$ is minimal, in which case
it is referred to as the \textbf{rank} of the QMS.
The dual representation (\ref{eq:GKSL^star}) is minimal whenever it is dual to a minimal (\ref{eq:GKSL}).
\end{definition}

\noindent
Note that if the representation (\ref{eq:GKSL}) is minimal, then $\Eye_N,L_1,\cdots,L_d$ are linearly independent \cite[Theorem 30.16]{Partha92}. A key concept in discussing Markovian dynamics is Lindblad's definition of {\em dissipator} \cite{Lindblad76}, namely:
\begin{eqnarray}
\mathscr{D}_{\mathscr{L}} (X,Y ) 
\triangleq \mathscr{L}(X Y) -\mathscr{L} (X  ) Y - X  \mathscr{L} (Y).
\end{eqnarray}
Lindblad showed that the generator of a completely positive semigroup must satisfy the dissipativity property
$\mathscr{D}_{\mathscr{L}} (X^\ast ,X) \geq 0$ \cite{Lindblad76}. One sees that the dissipator vanishes,
$\mathscr{D}_{\mathscr{L}} (X^\ast,X) =0$ for all $X$, if and only if $\mathscr{L}$ is Hamiltonian. Following Lindblad
\cite{Lindblad76}, we say that a generator is \textit{pure} if it takes the form $\mathscr{L} X = \frac{1}{2}  [ L^\ast , X] L
+ \frac{1}{2} L^\ast [ X, L]$, which can be easily seen to correspond to the dissipator
$\mathscr{D}_{\mathscr{L}} (X,X) = [X,L]^\ast [X,L]$. The general form (\ref{eq:GKSL}) is then just a sum of $d$ 
pure generators, plus a Hamiltonian term.

\subsection{Quantum stochastic evolutions} 

Dilations of QMSs were realized through the quantum stochastic calculus of Hudson and Parthasarathy \cite{HP84}, now also often referred to as the {\em SLH formalism} in the context of describing quantum feedback networks \cite{GJ09}. We will work on the joint Hilbert space of the system and field degrees of freedom, $\mathfrak{h} \otimes \mathfrak{F}$, where $\mathfrak{F}$ is a prescribed Fock space on which canonical (bosonic) annihilation and creation processes $B_k(t), B_k(t)^\ast$ (for  $k=1, \ldots, d$) are defined. In this formalism, the evolution with respect to $d$ input processes satisfies unitary quantum stochastic dynamics described by a QSDE of the general form
\begin{eqnarray}
d U_G(t) = dG(t) \,  U_G(t) ,
\qquad U_G (0) = \Eye,
\label{eq:U_QSDE}
\end{eqnarray}
where the \textbf {differential germ} is given by 
\begin{eqnarray}
d  G(t) &=& \sum_{jk} (S_{jk} - \delta_{jk} \,\Eye_N ) d \Lambda_{jk} (t) +\sum_j L_j d B_j (t)^\ast
- \sum_{jk} L_j^\ast S_{jk} dB_k (t) \nonumber\\
 &- & \Big( iH + \frac{1}{2} \sum_k L^\ast_k L_k \Big)dt   .
\label{eq:dG_QSDE}
\end{eqnarray}
Here, the repeated indices are summed from 1 to $d$, $\Lambda_{jk} (t)$ are the exchange processes,
and $\Eye$ (with no subscript) is a shorthand for the identity operator on $\mathfrak{h} \otimes \mathfrak{F}$ \cite{HP84}.

On this joint space, we have the triple $G \sim (\mathbf{S},\mathbf{L},H)$. The $\mathbf{S}$ denotes a $d\times d$ array whose entries, $S_{jk}$, are system operators ($N \times N$ matrices). The $\mathbf{L}$ is a column vector of length $d$ with entries,  $L_j$, that are system operators. Finally, we have the system Hamiltonian $H=H^\ast$. The objects $S_{jk} , L_k ,H$ are operators on the $N$-dimensional system space, $\mathfrak{h}$, and a $d$-dimensional \textbf{multiplicity space} $\mathfrak{K}$ is also associated to the input noise fields. Taking $\Omega$ to be the \textbf{Fock vacuum state}, we obtain a QMS $\Phi_t$ with generator $\mathscr{L}$ as in
(\ref{eq:GKSL}) by the prescription
\begin{eqnarray}
\langle u , \, \Phi_t (X) v \rangle \equiv \langle u \otimes \Omega , \, U(t)^\ast [X \otimes \Eye] U(t) \, v \otimes \Omega \rangle ,
\label{eq:Phi_G}
\end{eqnarray}
for each bounded system operator $X$, where $\langle \,,\,\rangle$ denotes inner product in the appropriate Hilbert space.

\smallskip

It is noteworthy that the scattering matrix $\mathbf{S}$ does {\em not} appear in the Lindbladian (\ref{eq:GKSL}), only the coupling operators and the Hamiltonian. In fact, we have the following:

\begin{proposition}
\label{prop:S}
For $G \sim (\mathbf{S},\mathbf{L},H)$ the generating data for a unitary quantum stochastic evolution $U_G (t)$ as in (\ref{eq:U_QSDE}),
let $\Phi_{G , t}$ denote the corresponding QMS and $\mathscr{L}_G$ the associated Lindbladian. Then
\begin{eqnarray}
U_{(\mathbf{S},\mathbf{L},H)} (t) \, | v \otimes \Omega \rangle = 
U_{(\Eye,\mathbf{L},H)} (t) \, | v \otimes \Omega \rangle , \quad \forall v \in \mathfrak{h}.
\label{eq:U=U}
\end{eqnarray}
Moreover, $ \mathscr{L}_{(\mathbf{S},\mathbf{L},H)} = \mathscr{L}_{(\Eye,\mathbf{L},H)} $ which we will denote as $  \mathscr{L}_{(\mathbf{L},H)}$ for simplicity.
\end{proposition}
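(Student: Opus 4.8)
The plan is to prove the vector identity (\ref{eq:U=U}) first and then deduce the statement about the Lindbladians; the underlying fact is that the scattering array $\mathbf{S}$ acts trivially whenever one of the two field vectors flanking $U_G(t)$ is the Fock vacuum $\Omega$.

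To exhibit this I would take matrix elements of the QSDE (\ref{eq:U_QSDE})--(\ref{eq:dG_QSDE}) between a vector $a\otimes e(f)$ on the left --- with $a\in\mathfrak{h}$ and $e(f)$ an exponential vector --- and $v\otimes\Omega$ on the right. In the weak (matrix-element) form of a Hudson--Parthasarathy equation on exponential vectors, the differential germ $dG(s)$ gets replaced by the bounded system operator obtained from (\ref{eq:dG_QSDE}) through $d\Lambda_{jk}(s)\mapsto\overline{f_j(s)}g_k(s)$, $dB_j(s)^\ast\mapsto\overline{f_j(s)}$, $dB_k(s)\mapsto g_k(s)$, $dt\mapsto 1$, the system operators $S_{jk},L_j,H$ being left untouched. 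Since the field vector on the right is $\Omega$ we set $g\equiv 0$, whereupon every term of (\ref{eq:dG_QSDE}) carrying a factor $g_k$ is annihilated --- precisely the gauge term $\sum_{jk}(S_{jk}-\delta_{jk})d\Lambda_{jk}$ and the annihilation term $-\sum_{jk}L_j^\ast S_{jk}dB_k$, which are exactly the two occurrences of $\mathbf{S}$. Writing $K:=iH+\tfrac12\sum_k L_k^\ast L_k$, what survives is
\[
\frac{d}{dt}\langle a\otimes e(f),\,U_G(t)\,v\otimes\Omega\rangle
=\Big\langle a\otimes e(f),\;\Big(\sum_j\overline{f_j(t)}\,L_j-K\Big)\,U_G(t)\,v\otimes\Omega\Big\rangle .
\]
Transferring the system operators to the left slot turns this into a closed linear ODE for the finite family $\phi_a(t):=\langle a\otimes e(f),U_G(t)\,v\otimes\Omega\rangle$, $a\in\mathfrak{h}$, namely $\dot\phi_a(t)=\sum_j\overline{f_j(t)}\,\phi_{L_j^\ast a}(t)-\phi_{K^\ast a}(t)$ with $\phi_a(0)=\langle a,v\rangle$, in which $\mathbf{S}$ no longer appears. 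The generating data $(\Eye,\mathbf{L},H)$ yield the identical ODE with the identical initial condition, so by uniqueness of solutions of linear ODEs the matrix elements coincide for all $a$ and all $f$; since the vectors $a\otimes e(f)$ are total in $\mathfrak{h}\otimes\mathfrak{F}$, (\ref{eq:U=U}) follows.

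For the second claim I would substitute (\ref{eq:U=U}), applied to both $u$ and $v$, into the defining relation (\ref{eq:Phi_G}) written as $\langle u,\Phi_{G,t}(X)v\rangle=\langle U_G(t)\,u\otimes\Omega,\,(X\otimes\Eye)\,U_G(t)\,v\otimes\Omega\rangle$; replacing $U_G(t)\,u\otimes\Omega$ and $U_G(t)\,v\otimes\Omega$ by their $(\Eye,\mathbf{L},H)$ counterparts gives $\Phi_{G,t}=\Phi_{(\Eye,\mathbf{L},H),t}$ for all $t\ge 0$, hence the two QMS, and therefore their generators, coincide. The only real subtlety --- the ``hard part'' --- is the rigorous justification of the matrix-element/ODE reduction: that $U_G(t)$ exists and is unitary and that the first fundamental formula genuinely kills the gauge and annihilation contributions against the vacuum. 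With $\mathfrak{h}$ finite-dimensional and $S_{jk},L_k,H$ bounded this is entirely standard \cite{HP84,Partha92}, so the proof amounts to the bookkeeping above once the vanishing mechanism is identified; alternatively one could argue from the iterated-integral expansion of $U_G(t)\,v\otimes\Omega$, whose right-most integrator always meets $\Omega$, but this route is no shorter.
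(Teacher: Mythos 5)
Your proof is correct and rests on exactly the same mechanism as the paper's: the gauge and annihilation increments vanish against the vacuum on the right, so every $\mathbf{S}$-dependent term in the germ (\ref{eq:dG_QSDE}) drops out and uniqueness of the resulting evolution gives (\ref{eq:U=U}), with the Lindbladian statement following from (\ref{eq:Phi_G}). The only difference is presentational — you work with the weak (matrix-element) form against exponential vectors, whereas the paper applies the QSDE directly to $|v\otimes\Omega\rangle$ — but this is the same argument.
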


\begin{proof}
As the (future-pointing) It\={o} increments $dB_j (t)$ and $d \Lambda_{jk} (t)$ annihilate the (future factor) of the vacuum vector $\vert \Omega \rangle $, it follows from (\ref{eq:U_QSDE}) that
\begin{eqnarray*}
dU_G (t) \,  | v \otimes \Omega \rangle = \bigg[  \sum_j L_j d B_j (t)^\ast
- \Big( iH + \frac{1}{2} \sum_k L^\ast_k L_k  \Big) \,dt \bigg] \,  | v \otimes \Omega \rangle ,
\end{eqnarray*}
which does not depend on $\mathbf{S}$. By the uniqueness of the quantum stochastic process \cite{HP84}, we deduce (\ref{eq:U=U}).
\end{proof}

In what follows, two composition rules will be relevant for combining SLH triples of individual components 
\cite{GJ09}. Let $G\sim \left( \mathbf{S},\mathbf{L},H\right) $ and $G^{\prime }\sim \left( \mathbf{S}^{\prime
},\mathbf{L}^{\prime },H^{\prime }\right) $ be SLH triples with the same system space
and multiplicity space. The {\bf series product} is given by 
\begin{eqnarray}
\hspace*{-5mm}
G \vartriangleleft G^\prime = 
\left( \mathbf{S}^{\prime},\mathbf{L}^{\prime }, H^{\prime }\right) \vartriangleleft\left( \mathbf{S},\mathbf{L},H\right)
\sim \Big( \mathbf{S}^{\prime } \mathbf{S},\, \mathbf{S}^\prime \mathbf{L} + \mathbf{L}^{\prime },\, H^{\prime } +
\text{Im} \left\{ \mathbf{L}^{\prime \ast} \mathbf{S}^\prime \mathbf{L} \right\} \Big) .\quad 
\label{series}
\end{eqnarray}
Likewise, the {\bf concatenation product} is given by 
\begin{eqnarray}
\hspace*{-5mm}
G \boxplus G^\prime = \left( \mathbf{S}^{\prime},\mathbf{L}^{\prime },H^{\prime }\right) \boxplus \left( \mathbf{S},\mathbf{L},H\right)
\sim \bigg(
\bigg[
\begin{array}{cc}
	\mathbf{S} & 0 \\
	0 & \mathbf{S}^\prime
\end{array}\bigg],
\bigg[
\begin{array}{l}
	\mathbf{L}  \\
  \mathbf{L}^\prime 
\end{array}\bigg] ,
H +H^\prime \bigg) .
\label{concat}
\end{eqnarray}

\subsection{Bistochastic quantum Markov semigroups}
\label{sec:bistochastic}
An important class of QMS arises by demanding that the dual also defines a valid QMS:

\begin{definition}
A QMS $\{ \Phi_t : t \geq 0\}$ is \textbf{bistochastic} if its Schr\"{o}dinger dual $ \{\Phi_t^\star : t \ge 0 \}$
is also a QMS, in particular $\Phi^\star_t (\Eye_N ) =\Eye_N$.
\end{definition}

\begin{proposition}
The QMS corresponding to $G\sim (\mathbf{S},\mathbf{L},H)$ is bistochastic if and only if
\begin{eqnarray}
\sum_{k=1}^d L^\ast_k L_k = \sum_{k=1}^d L_k L_k^\ast  .
\label{eq:LL}
\end{eqnarray}
\end{proposition}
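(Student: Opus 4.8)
The plan is to reduce the bistochasticity condition to an infinitesimal statement about the dual generator $\mathscr{L}^\star$ and then read off the answer by a one-line computation. First, by Proposition \ref{prop:S} the Lindbladian depends only on the pair $(\mathbf{L},H)$, so I may take $\mathbf{S}=\Eye$ and work with $\mathscr{L}^\star$ in the explicit form (\ref{eq:GKSL^star}); the scattering matrix plays no role. Next I would observe that each $\Phi_t^\star$ is the predual of the completely positive unital map $\Phi_t$, hence is automatically completely positive and trace-preserving, and that $\{\Phi_t^\star : t\ge 0\}$ inherits the semigroup property from $\{\Phi_t\}$. Consequently, the only thing that can fail for $\{\Phi_t^\star\}$ to itself be a QMS is conservativity, so $\{\Phi_t^\star\}$ is a QMS precisely when $\Phi_t^\star(\Eye_N)=\Eye_N$ for every $t\ge 0$.

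The second step is to pass from this one-parameter family of identities to a single infinitesimal one. Since $t\mapsto \Phi_t^\star(\Eye_N)$ solves the linear matrix ODE $\dot Y_t=\mathscr{L}^\star(Y_t)$ with initial datum $Y_0=\Eye_N$ on the finite-dimensional space of $N\times N$ matrices, uniqueness of solutions yields that $\Phi_t^\star(\Eye_N)=\Eye_N$ for all $t\ge 0$ if and only if $\mathscr{L}^\star(\Eye_N)=0$: the ``if'' direction holds because $\Eye_N$ is then a stationary solution, and the ``only if'' direction follows by differentiating the identity $\Phi_t^\star(\Eye_N)=\Eye_N$ at $t=0$.

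Finally, I would evaluate $\mathscr{L}^\star$ on $\Eye_N$ using (\ref{eq:GKSL^star}): the term $i[\Eye_N,H]$ vanishes, $\sum_k L_k\Eye_N L_k^\ast=\sum_k L_k L_k^\ast$, and $\tfrac12\sum_k\big(L_k^\ast L_k\Eye_N+\Eye_N L_k^\ast L_k\big)=\sum_k L_k^\ast L_k$, so that $\mathscr{L}^\star(\Eye_N)=\sum_k L_k L_k^\ast-\sum_k L_k^\ast L_k$. Hence $\mathscr{L}^\star(\Eye_N)=0$ is exactly the asserted condition $\sum_{k=1}^d L_k^\ast L_k=\sum_{k=1}^d L_k L_k^\ast$, completing the equivalence. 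I do not anticipate a genuine obstacle; the only point needing a little care is the justification that ``the dual is a QMS'' reduces to conservativity of $\Phi_t^\star$ — i.e.\ that complete positivity and the semigroup law come for free from duality — after which the argument is routine.
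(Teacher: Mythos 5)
Your proof is correct, and the forward direction coincides with the paper's: bistochasticity forces $\mathscr{L}^\star(\Eye_N)=0$, and the one-line evaluation $\mathscr{L}^\star(\Eye_N)=\sum_k L_kL_k^\ast-\sum_k L_k^\ast L_k$ gives \eqref{eq:LL}. (Your ODE/uniqueness reduction from $\Phi_t^\star(\Eye_N)=\Eye_N$ for all $t$ to $\mathscr{L}^\star(\Eye_N)=0$, and the appeal to Proposition \ref{prop:S} to discard $\mathbf{S}$, are both fine and merely make explicit what the paper leaves implicit.)

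Where you genuinely diverge is the converse. You argue that the dual family is automatically a completely positive semigroup — because the trace-dual of a CP map is CP and duality reverses (hence, by commutativity of the semigroup, preserves) composition — so that the only property left to verify is conservativity, which is exactly $\mathscr{L}^\star(\Eye_N)=0$. The paper instead shows that under \eqref{eq:LL} the dual generator is \emph{itself} of GKSL form, $\mathscr{L}_{(\mathbf{L},H)}^\star=\mathscr{L}_{(\mathbf{L}^\ast,-H)}$ (the coupling operators are replaced by their adjoints and the Hamiltonian changes sign; the middle "anticommutator" terms match precisely because of \eqref{eq:LL}). That single structural identity delivers complete positivity and conservativity of the dual semigroup in one stroke, without invoking the abstract duality facts, and it is also the form that gets reused later (e.g.\ in the self-duality remark that the dual of $\mathscr{L}_{(\mathbf{L},H)}$ is $\mathscr{L}_{(\mathbf{L},-H)}$ when all $L_k$ are self-adjoint). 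Your route is equally valid in the finite-dimensional setting considered here, at the price of having to justify that "dual of CP is CP" — which you correctly flag as the one point needing care, and which is standard (e.g.\ via the Kraus representation).
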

\begin{proof}
If the QMS is bistochastic, then $\mathscr{L}_{(\mathbf{L},H)}^\star (\Eye_N)=0$ which implies (\ref{eq:LL}). Conversely, if (\ref{eq:LL}) holds, then $\mathscr{L}_{(\mathbf{L},H)} = \mathscr{L}_{(\mathbf{L}^\ast , H)}$, where $\mathbf{L}^\ast$ means the collection of operators $L_k^\ast$.
\end{proof}

\smallskip

Bistochasticity is therefore synonymous with the \textbf{unital} property, which means that $\mathscr{L}_{(\mathbf{L},H)}^\star (\Eye_N)=0$. In the case of a finite-dimensional Hilbert space  as we have assumed, it follows that the maximally mixed state, $\rho_{\text{max}} = \frac{1}{N} \Eye_N$ is invariant under the Schr\"{o}dinger dual semigroup. A complete characterization of bistochastic generators for the qubit case ($N=2$) is given in \cite{Bacon01}. If we fix a density matrix $\rho_0$ and define $\rho_t = \Phi_t^\star (\rho_0 )$ to be the Schr\"{o}dinger evolution of the state at time $t$, then the purity at time $t$ is defined as $p_t \triangleq {\rm tr} \{ \rho_t ^2 \}$. It is known that the purity decreases monotonically for $\text{dim} \, \mathfrak{h} < \infty$ if and only if the QMS is bistochastic \cite{LSA06} (in the infinite-dimensional case bistochasticity is sufficient though not necessary \cite{LSA06}).

\medskip

Some special cases where condition (\ref{eq:LL}) is satisfied are the following:
\begin{itemize}
\item \textbf{Self-duality (up to a Hamiltonian term):} this occurs when $L_k = L^\ast_k$ for each $k$, that is, 
all the couplings operators are self-adjoint
(note that the dual of $\mathscr{L}_{(\mathbf{L},H)}$ is $\mathscr{L}_{(\mathbf{L},-H)}$ in this case).

\item \textbf{Normal operator dissipation:} this occurs when $L^\ast_k L_k =  L_k L_k^\ast$ for each $k$, that is, 
all the coupling operators are normal.
\end{itemize}

\medskip

For dimension $N\geq 3$, it is known that the above self-duality and normality conditions are sufficient 
but {\em not} necessary for the corresponding QMS to be unital \cite{Bacon01}.

\begin{definition}
A triple $G\sim \left( \mathbf{S},\mathbf{L},H\right) $, with system space $\mathfrak{h}$ and
multiplicity space $\mathfrak{K}$, is said to be \textbf{minimal} if there is no triple $G^{\prime }$ with
the same system space $\mathfrak{h}$ and multiplicity space $\mathfrak{K}^{\prime }$ of lower
dimension such that $\mathscr{L}_{G}=\mathscr{L}_{G^{\prime }}$.
\end{definition}

\noindent 
Alternatively, we say that a representation $\mathscr{L}=\mathscr{L}_{G}$ of
a Lindbladian is minimal if $G$ is minimal: that is, we realize $\mathscr{L}$ through
an SLH model using as few noise channels as possible. Generally, there is no 
physical requirement for an actual model set-up to be minimal. The use of this notion 
is purely for mathematical convenience. The minimality condition can be restated as 
follows: $G\sim \left( \mathbf{S},\mathbf{L},H\right) $ is minimal if and only if the 
set $\left\{\Eye_d,L_{k}:k\right\} $ is linearly independent. (This means that if $
c_{0}\Eye_d+\sum_{k}c_{k}L_{k}=0$, with complex coefficients satisfying $
|c_{0}|^{2}+\sum_{k}|c_{k}|^{2}<\infty $, then $c_{0}=0$ and $c_{k}=0$ for
each $k$.)

\begin{definition}
Two SLH triples $G\sim \left( \mathbf{S},\mathbf{L},H\right) $ and $G^{\prime }\sim \left( \mathbf{S}^{\prime
},\mathbf{L}^{\prime },H^{\prime }\right) $ with the same system space
and multiplicity space are \textbf{Euclidean equivalent} if their series product 
\begin{eqnarray*}
 G^{\prime }=G_{\text{scalar}}\vartriangleleft G,
\end{eqnarray*}
where $G_{\text{scalar}}\sim \left( \mathbf{T},\mathbf{\beta} ,e\Eye_d\right) $, with $\mathbf{T}=\left[
T_{jk}\Eye_d\right] ,\mathbf{\beta} =\left[ \beta _{k}\Eye_d \right] $ and the $T_{jk},\beta
_{k} $ and $e$ complex scalars ($\mathbf{T}$ unitary and $e$ real).
\end{definition}

In terms of the actual coefficients, we have $\mathbf{S}^{\prime }=\mathbf{TS}$, $\mathbf{L}^{\prime
}=\mathbf{T \, L}+\mathbf{\beta} $, and $H^{\prime }=H+e+\text{Im}\left\{ \mathbf{\beta} ^{\ast }\mathbf{T}\mathbf{L}\right\} $
or, explicitly,
\begin{eqnarray}
S_{jk}^{\prime } &=&\sum_{l}T_{jl}S_{lk},  \notag \\
L_{j}^{\prime } &=&\sum_{l}T_{j l}L_{l}+\beta _{j}\Eye_N,  \notag \\
H^{\prime } &=&H+e\Eye_N +\frac{1}{2i}\sum_{jk}\left\{ \beta _{j}^{\ast
}T_{jk}L_{k}-L_{j}^{\ast }T_{kj}^{\ast }\beta _{k}\right\} .
\label{eq:transf_SLH}
\end{eqnarray}
The above transformation properties recover the known conditions for invariance of the Lindbladian under a change in representation (sometimes also referred to as ``gauge freedom'' in the literature), $\mathscr{L}_{(\mathbf{L},H)} =
\mathscr{L}_{(\mathbf{L}',H')}$ \cite{Lindblad76,AFG13,TicozziQIC,Peter2016}.  In particular, the complex damping operator 
\begin{eqnarray}
K \triangleq -\frac{1}{2} \sum_k L^\ast _k L_k -iH, 
\label{eq:K_def}
\end{eqnarray} 
transforms as
\begin{eqnarray}
K^\prime = K -\sum_{jk} \beta_j^\ast T_{jk} L_k -\Big( \frac{1}{2} \sum_k
|\beta_k |^2 +i e\Big) \, \Eye_N .
\label{eq:transf_K}
\end{eqnarray}
\noindent 
The following result is proved in \cite{Partha92}:

\begin{theorem}[Parthasarathy \cite{Partha92} Thm. 30.16]
\label{thm:equiv}
Let $\mathscr{L}$ be a Lindbladian on the space of bounded linear operators on ${\mathfrak{h}}$, $\mathfrak{B}(\mathfrak{h})$, and let $\mathscr{L}=\mathscr{L}_{G}$ be a minimal SLH representation. Then all other minimal representations are Euclidean equivalent.
\end{theorem}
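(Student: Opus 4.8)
The plan is to recover, up to exactly the freedom displayed in (\ref{eq:transf_SLH}), the coupling operators and the Hamiltonian of a minimal representation directly from $\mathscr{L}$. Since by Proposition \ref{prop:S} the Lindbladian does not involve the scattering matrix, I would identify a representation with its data $(\mathbf{L},H)$ and write $\mathscr{L}(X)=K^{\ast}X+XK+\sum_{k=1}^{d}L_{k}^{\ast}XL_{k}$, with $K$ the complex damping operator of (\ref{eq:K_def}). Given two minimal representations $(\mathbf{L},H)$ and $(\mathbf{L}',H')$ of the same $\mathscr{L}$, it then suffices to produce a $d\times d$ scalar unitary $\mathbf{T}=[T_{jk}\Eye_{N}]$, scalars $\beta_{1},\dots,\beta_{d}$, and a real number $e$ realising (\ref{eq:transf_SLH}).

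First I would normalise. The Euclidean transformation with identity $\mathbf{T}$ and $\beta_{k}=-\frac{1}{N}{\rm tr}\,L_{k}$ sends $L_{k}\mapsto L_{k}-\frac{1}{N}({\rm tr}\,L_{k})\Eye_{N}$; applying it to both representations puts them in the traceless gauge ${\rm tr}\,L_{k}=0$, and since it only adds multiples of $\Eye_{N}$ to the $L_{k}$ it does not change $\mathrm{span}\{\Eye_{N},L_{1},\dots,L_{d}\}$, so minimality is preserved. Fix a Hilbert--Schmidt orthonormal basis $F_{0}=\frac{1}{\sqrt N}\Eye_{N},F_{1},\dots,F_{N^{2}-1}$ of $\mathfrak{B}(\mathfrak{h})$ with $F_{\alpha}$ traceless for $\alpha\ge1$, and expand $L_{k}=\sum_{\alpha\ge1}c_{k\alpha}F_{\alpha}$, $L_{k}'=\sum_{\alpha\ge1}c_{k\alpha}'F_{\alpha}$. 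The dissipative parts then read $\Psi(X)=\sum_{k}L_{k}^{\ast}XL_{k}=\sum_{\alpha,\beta\ge1}G_{\alpha\beta}F_{\alpha}^{\ast}XF_{\beta}$ and similarly $\Psi'$ with $G'$, where $G=c^{\ast}c$ and $G'=(c')^{\ast}c'$. As both representations give $\mathscr{L}$, we obtain $\Psi(X)-\Psi'(X)=D^{\ast}X+XD$ with $D=K'-K$; writing $D=\sum_{\gamma\ge0}d_{\gamma}F_{\gamma}$, the right-hand side is a linear combination of the maps $X\mapsto F_{\gamma}^{\ast}XF_{0}$ and $X\mapsto F_{0}XF_{\gamma}$ --- supported on index pairs with at least one entry equal to $0$ --- while the left-hand side uses only index pairs with both entries $\ge1$. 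Since the $N^{4}$ maps $X\mapsto F_{\alpha}^{\ast}XF_{\beta}$ are linearly independent and these two index sets are disjoint, both sides must vanish, giving $G=G'$ and hence $\Psi=\Psi'$. I expect this step to be the main obstacle: it is in essence the uniqueness half of the Gorini--Kossakowski--Sudarshan structure theorem, and the traceless normalisation is exactly what makes it work.

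Now $\Psi=\Psi'$ is a fixed completely positive map, and since minimality (preserved above) makes $\{L_{1},\dots,L_{d}\}$ and $\{L_{1}',\dots,L_{d'}'\}$ linearly independent, both are \emph{minimal} operator-sum families for $\Psi$; by the standard uniqueness of minimal operator-sum (Kraus) representations, $d=d'$ and $L_{j}'=\sum_{k}u_{jk}L_{k}$ in the traceless gauge for some $d\times d$ unitary $u$. Undoing the two trace normalisations, this becomes $L_{j}'=\sum_{k}u_{jk}L_{k}+\beta_{j}\Eye_{N}$ in the original gauges, with suitable scalars $\beta_{j}$ --- precisely the $\mathbf{L}$-line of (\ref{eq:transf_SLH}). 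Applying the Euclidean transformation with $\mathbf{T}=[u_{jk}\Eye_{N}]$, these $\beta_{j}$, and $e=0$ to $(\mathbf{L},H)$ produces a representation $(\mathbf{L}',\tilde H)$ whose generator is still $\mathscr{L}$ by the gauge invariance recorded in (\ref{eq:transf_SLH}); comparing it with $(\mathbf{L}',H')$, whose $\mathbf{L}'$-dependent terms are identical, the equality of generators reduces to $[X,\tilde H-H']=0$ for all $X$, forcing $\tilde H-H'=-e\Eye_{N}$ for a real $e$ (the operators being self-adjoint), i.e.\ $(\mathbf{L}',H')$ differs from $(\mathbf{L}',\tilde H)$ by the scalar transformation with identity $\mathbf{T}$, vanishing $\mathbf{\beta}$, and Hamiltonian shift $e$. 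Since the scalar triples $(\mathbf{T},\mathbf{\beta},e\Eye_{d})$ are closed under the associative series product (\ref{series}), composing the two scalar transformations above gives a single Euclidean transformation taking $G$ to $G'$, which establishes that any two minimal representations are Euclidean equivalent.
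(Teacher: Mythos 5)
The paper offers no proof of this statement: it is simply quoted from Parthasarathy (Theorem 30.16 of \cite{Partha92}), so there is no internal argument to compare against. Your proposal is a correct, self-contained proof, and it follows what is in essence the standard (and Parthasarathy's own) route: uniqueness of the Gorini--Kossakowski--Sudarshan decomposition followed by the unitary freedom in minimal operator-sum representations. The two load-bearing steps are both sound. First, the traceless normalisation ensures that $\sum_k L_k^\ast X L_k-\sum_k L_k^{\prime\ast}XL_k^{\prime}$ and $D^\ast X+XD$ live on disjoint index sets of the linearly independent maps $X\mapsto F_\alpha^\ast XF_\beta$, which forces $G=G'$ (and, as a by-product you do not need but could have used, $K'-K$ to be an imaginary multiple of $\Eye_N$). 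Second, minimality as defined in the paper gives linear independence of $\{L_k\}$ and $\{L'_j\}$ separately, so the uniqueness of minimal Kraus families applies and produces the scalar unitary $\mathbf{T}$; the Hamiltonian is then pinned down up to a real scalar $e$ because the residual difference of generators is a commutator with a self-adjoint operator that must commute with everything. Your opening caveat is also the right reading of the statement: since the Lindbladian does not see $\mathbf{S}$ (Proposition \ref{prop:S}), ``representation'' here must mean the data $(\mathbf{L},H)$ --- taken literally for full SLH triples the claim would fail, as $\mathbf{S}$ and $\mathbf{S}'$ need not be related by a scalar unitary. The only things left implicit are two standard facts you correctly name rather than prove (linear independence of the maps $X\mapsto F_\alpha^\ast XF_\beta$, and Kraus uniqueness), which is acceptable at this level of detail.
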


We remark that while several results concerning QMSs  can be formulated in terms of a representative
SLH triple, $G$, these results must then be {\em covariant} under transformation of $G$ to a Euclidean
equivalent one. One way of narrowing down the possible equivalence class is to specify the average for
a fixed state:

\begin{definition}
Let $\mathbb{E}[\cdot]= {\rm tr}\{ \rho \, \cdot\}$ be a state corresponding to a density operator $\rho$.
Then $G \sim(\mathbf{S},\mathbf{L},H)$ is \textbf{centered with respect to} $\mathbb{E}$ if $\mathbb{E}[H]=0$ and $
\mathbb{E}[L_k ] = 0$ for all $k$.
\end{definition}

\noindent
Clearly, we can always center $H$ and all the $L_k$ using an appropriate Euclidean transformation.

\subsection{Illustrative examples}

We illustrate the concepts introduced so far by revisiting some paradigmatic examples. 

\subsubsection{Dephasing (phase damping).}
\label{pd}
With $N=2$, take $\mathfrak{B}(\mathfrak{h})=M_{2}$, the space of $2\times 2$ complex matrices, and consider the $d=1$
input model $G\sim \left(  {1\kern-4truept 1}_2, \sqrt{\gamma}\sigma _{z},0 \right)$. Then the Lindbladian is
\begin{eqnarray}
\mathscr{L}_{G}\left( X\right) =\gamma \left( \sigma _{z}X\sigma
_{z}-X\right) , \quad \gamma >0.
\label{eq:qubit_dephase_Lindbladian}
\end{eqnarray}
In this case, $\mathscr{L}^{\star }= \mathscr{L}$ so the QMS is the
same as its dual, and thus automatically bistochastic. The constants of the motion are those operators 
commuting with $\sigma _{z}$, and these are precisely the operators of the form
$\alpha \Eye_2+\beta \sigma _{z}$ for complex numbers $\alpha ,\beta $. As the QMS
is self-dual, the stationary states must have this form too, so we find the
family $\mathscr{E}=\{  \frac{1}{2}\Eye_2 + \frac{1}{2}\lambda \sigma _{z}:\lambda \in \mathbb{
R},|\lambda |\leq 1\}$. As is well known, the master equation, $\dot{\rho}\left(
t\right) =\mathscr{L}^{\star }\left( \rho \left( t\right) \right) $ can be
solved explicitly and, subject to the initial condition $\rho \left(
0\right) =\bigg[\!
\begin{array}{cc}
\rho _{11}( 0) & \rho _{10}( 0) \\
\rho _{00}( 0) & \rho _{00}( 0)
\end{array}\!
\bigg] $, we have
\begin{eqnarray}
\rho \left( t\right) =\bigg[\!
\begin{array}{cc}
\rho _{11}\left( 0\right) & e^{-\gamma t}\rho _{10}\left( 0\right) \\
e^{-\gamma t}\rho _{00}\left( 0\right) & \rho _{00}\left( 0\right)
\end{array}\!
\bigg]  \;\rightarrow  \; \bigg[\!
\begin{array}{cc}
\rho _{11}\left( 0\right) & 0 \\
0 & \rho _{00}\left( 0\right)
\end{array} \!
\bigg] \in {\mathscr E},
\label{eq:pure_dephase_rho}
\end{eqnarray}
in the asymptotic long-time limit.
The limit therefore depends on the initial state: the diagonal terms, representing
populations, are unchanged, whereas the off-diagonal coherence terms vanish.
Recall that a state $\rho$ is {\bf faithful} if, whenever $X\geq 0$ and ${\rm tr} \{\rho X\} =0$, then we 
must have $X=0$. The family $\mathscr{E}$ has the property that all elements except  
$|\lambda|=1$ are faithful states.  To see this, note that $\rho _{0}=|e_{0}\rangle \langle e_{0}|$ and
$\rho _{1}=|e_{1}\rangle \langle e_{1}|$ are extreme elements in ${\mathscr E}$, and 
every other element is a convex combination of these two. It then follows that
if $X=\left[
\begin{array}{cc}
x_{11} & x_{10} \\
x_{01} & x_{00}
\end{array}
\right] \geq 0$ has vanishing expectation for both $\rho_0$ and $\rho_1$, then $
x_{11}=x_{00}=0$ and positivity of $X$ then requires that
$x_{01}=x_{10}^{\ast }=0$ too.

\smallskip

\begin{remark}
In more general dephasing situations, we have a complete orthonormal basis,
$\left\{ |e_{k}\rangle \right\} _{k}$, for which each pure state
$|e_{k}\rangle \langle e_{k}|$ is stationary, whereas $\Phi _{t}^{\star }\left(
|e_{j}\rangle \langle e_{k}|\right) $ vanishes for large $t$ when $j\neq k$.
The set $\left\{ |e_{k}\rangle \langle e_{k}|\right\} _{k}$ forms a 
family of stationary states, and the convex combinations of these
states yield all the invariant states.
\end{remark}

\subsubsection{Depolarization.}
\label{depol}
By still working with $N=2$, a $d=3$ input model may be constructed by letting
$G\sim \left( {1\kern-4truept 1}_2,(\sqrt{\gamma_x }\sigma _{x}, \sqrt{\gamma_x }\sigma_y, \sqrt{\gamma_x }\sigma_z),0 \right)$.
Accordingly, 
\begin{eqnarray*}
\mathscr{L}_{G}\left( X\right) = \sum_{u=x,y,z} \gamma_u \left( \sigma _{u}X\sigma _{u}- X \right) ,\quad \gamma_u >0, 
\end{eqnarray*}
which corresponds to a sum of pure generators, each implementing a phase damping process with strength $\gamma_u$ along
the $u$-th direction. Clearly, the resulting QMS is still self-dual, hence bistochastic.  Unlike for dephasing, however, the only
stationary state $\rho_\infty$ is now the fully mixed density operator, and no rank-one projector of the form $|e_k\rangle\langle e_k|$
exists, that is invariant under the dynamics.

\subsubsection{Relaxation (amplitude damping).} 
\label{ad}
Again, take $\mathfrak{B}(\mathfrak{h})=M_{2}$, but now let the triple $G\sim
\left(  {1\kern-4truept 1}_2,\sqrt{\gamma }\sigma _{-},0\right) $. Then the Lindbladian is
\begin{eqnarray*}
\mathscr{L}_{G}\left( X\right) =\gamma \Big( \sigma _{+}X\sigma _{-}-\frac{1
}{2}\sigma _{+}\sigma _{-}X-\frac{1}{2}X\sigma _{+}\sigma _{-}\Big) ,\quad \gamma >0, 
\end{eqnarray*}
or, equivalently,
\begin{eqnarray*}
\mathscr{L}_{G}\bigg[
\begin{array}{cc}
x_{11} & x_{10} \\
x_{01} & x_{00}
\end{array}
\bigg] =-\gamma \bigg[
\begin{array}{cc}
x_{11}-x_{00} & \frac{1}{2}x_{10} \\
\frac{1}{2}x_{01} & 0
\end{array}
\bigg] .
\end{eqnarray*}
Since $\sigma_+\sigma_- \neq \sigma_-\sigma_+$, the QMS is not bistochastic. Evidently there are no faithful stationary states
other than multiples of the identity. The master equation $ \dot \rho = \mathscr{L}_G^\star \rho$ can be solved explicitly, yielding 
\begin{eqnarray*}
\rho \left( t\right) =\bigg[\!
\begin{array}{cc}
e^{-\gamma t}\rho _{11}\left( 0\right) & e^{-\gamma t/2}\rho _{10}\left(
0\right) \\
e^{-\gamma t/2}\rho _{00}\left( 0\right) & 1-e^{-\gamma t}\rho _{11}\left(
0\right)
\end{array}\!
\bigg] \;\rightarrow \; \bigg[\!
\begin{array}{cc}
0 & 0 \\
0 & 1
\end{array}\!
\bigg],
\end{eqnarray*}
for large times.
Accordingly, there is a unique stationary state, $\rho_{\infty} \equiv |e_{0}\rangle \langle e_{0}|$,
which is {\em pure} and therefore not faithful.

\subsubsection{Relaxation to a pure state and decay.}
The above example provides the textbook example of a QMS that admits a pure state $\rho_\infty$ as its unique stationary state, and can, as such, model physical processes such as purification or ground-state cooling. In such a case, the subspace of the system's Hilbert space that is associated to {\em non-decaying} components is one-dimensional, with a rank-one orthogonal projector $P$ obeying $\rho_\infty = P \rho_\infty P$, and a corresponding $(N-1)$-dimensional {\em decaying} subspace associated to $Q =\Eye_N-P$ \cite{BNII,FR02,DFSU}. In the general case where the steady-state manifold is not one-dimensional, we may require the orthogonal projection $P$ to additionally obey ${\rm tr}(P)=\text{max}_{\rho_\infty}\{\text{rank}(\rho_\infty)\}$, so that $Q=\Eye_N-P$ gives the maximal orthogonal projection for which $\lim_{t \to \infty} Q\, \Phi_t^\star (\rho) \,Q  =0$, for all initial density operators $\rho$ (see also \cite{DFSU,Albert} for a recent characterization of the properties of the generator based on a block-decomposition into decaying and non-decaying components). Accordingly, no decaying subspace exists ($P=\Eye_N$) for the dephasing and depolarizing Lindbladians described in \S \ref{pd}-\S\ref{depol}.

Conditions under which a QMS may admit a unique pure stationary state have been extensively investigated in the mathematical-physics literature \cite{Frig78,Frigerio85,FR01}, and have received recent attention in connection to dissipative quantum state stabilization, see for instance \cite{TicozziTAC,TicozziQIC,Peter2016}. The following result is worth recalling:

\begin{theorem}[Frigerio \protect\cite{Frigerio85} Thm. 3.2]
\label{thm:Frigerio85}
Suppose that $\mathbb{E}_0$ is a stationary pure
state of a QMS with generator $\mathscr{L}$, say $\mathbb{E}_0 [X] = \langle e_0 , X \, e_0
\rangle$ for a unit vector $\vert e_0\rangle$. Then the generator may be
written in the form $\mathscr{L} = \mathscr{L}_G$, where
\begin{eqnarray}
K \vert e_0 \rangle =0, \qquad L_k \vert e_0 \rangle =0, \quad \forall k, 
\label{eq:Conditions_Frigerio}
\end{eqnarray}
where $K=-\frac{1}{2} \sum_k L^\ast _k L_k -iH$ is the complex damping operator defined in (\ref{eq:K_def}).
\end{theorem}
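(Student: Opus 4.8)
## Proof Proposal for Theorem \ref{thm:Frigerio85}

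The plan is to pass to the Schr\"{o}dinger picture and use the compact form $\mathscr{L}^\star \rho = \sum_{k=1}^d L_k \rho L_k^\ast + K \rho + \rho K^\ast$, obtained by regrouping the terms of (\ref{eq:GKSL^star}) around the complex damping operator $K$ of (\ref{eq:K_def}). Recall that $\mathbb{E}_0$ being stationary is equivalent, via the identification $\mathbb{E}_0[\,\cdot\,]={\rm tr}\{\rho_0\,\cdot\,\}$ with $\rho_0=|e_0\rangle\langle e_0|$, to $\mathscr{L}^\star(\rho_0)=0$. Before using this, I would first exploit the residual gauge freedom: by a suitable Euclidean transformation the chosen representation may be taken to be centered with respect to $\mathbb{E}_0$, so that without loss of generality $\langle e_0, L_k e_0\rangle=0$ for every $k$ and $\langle e_0, H e_0\rangle=0$, while $\mathscr{L}_G$ itself is left unchanged.

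Writing $v_k \triangleq L_k|e_0\rangle$ and $w \triangleq K|e_0\rangle$, stationarity then reads
\begin{equation*}
\sum_{k=1}^d |v_k\rangle\langle v_k| \;+\; |w\rangle\langle e_0| \;+\; |e_0\rangle\langle w| \;=\; 0 .
\end{equation*}
The crux is to take the expectation of this operator identity in the vector $|e_0\rangle$: centering annihilates $\langle e_0, v_k\rangle$, and since the Hamiltonian contribution to $\langle e_0, K e_0\rangle$ is purely imaginary, one has ${\rm Re}\,\langle e_0, w\rangle = -\tfrac12\sum_k\|v_k\|^2$. The sandwiched identity therefore collapses to $\sum_k\|v_k\|^2=0$, whence $L_k|e_0\rangle=0$ for all $k$. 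Feeding this back, the operator identity reduces to $|w\rangle\langle e_0|+|e_0\rangle\langle w|=0$; applying it to $|e_0\rangle$ and using $\langle e_0, w\rangle = -i\langle e_0, H e_0\rangle = 0$ (the first equality now being a consequence of $L_k|e_0\rangle=0$, the second of $H$-centering) forces $w=K|e_0\rangle=0$. This is exactly (\ref{eq:Conditions_Frigerio}) for the centered representation, and since centering preserves $\mathscr{L}_G$, the proof is complete.

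The only genuinely delicate point is the initial reduction to a centered representation: one must check that a single Euclidean transformation can simultaneously annihilate the diagonal matrix elements $\langle e_0, L_k e_0\rangle$ for all $k$ and $\langle e_0, H e_0\rangle$, and that it preserves self-adjointness of the Hamiltonian. This is handled by taking $\mathbf{T}=\Eye$, choosing $\beta_k=-\langle e_0, L_k e_0\rangle$ in (\ref{eq:transf_SLH}) to center the couplings, and then choosing the real scalar $e$ to absorb the (automatically real) resulting value of $\langle e_0, H' e_0\rangle$; the transformation law for $H'$ keeps it self-adjoint by construction. Everything downstream of that reduction is the short positivity computation sketched above.
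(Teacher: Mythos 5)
Your proof is correct and is essentially the paper's argument transposed to the Schr\"{o}dinger picture: the single operator identity $\mathscr{L}^\star\bigl(|e_0\rangle\langle e_0|\bigr)=0$ packages the paper's family of test conditions $\langle e_0|\mathscr{L}(X)\,e_0\rangle=0$ for all $X$, and the key steps --- centering via a Euclidean transformation, the positivity argument forcing $\sum_k\|L_k e_0\|^2=0$, and then deducing that $K|e_0\rangle$ is proportional to $|e_0\rangle$ with the proportionality constant killed by centering --- coincide with those in the paper. Your closing remarks on why a single Euclidean transformation achieves the centering are also sound.
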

\begin{proof}
We must have
\begin{eqnarray}
\langle e_0 | \mathscr{L} (X) \, e_0 \rangle =0, \qquad \forall X\in
\mathfrak{B}( \mathfrak{h}).
\label{eq:phi_X}
\end{eqnarray}
Setting $X= | e_0 \rangle \langle e_0 | $ in (\ref{eq:phi_X}), we
therefore also have
\begin{eqnarray*}
0= \langle e_0 | \mathscr{L} \big( | e_0 \rangle \langle e_0 |
\big) \, e_0 \rangle= 2 \text{Re} \, \langle e_0 | K \, e_0 \rangle
+\sum_k | \langle e_0 | L_k e_0 \rangle |^2.
\end{eqnarray*}
Without loss of generality, we may assume that $G$ is centered with respect
to $\mathbb{E}_0$, in which case we must have $\text{Re}\{\mathbb{E}_0 [K]\}=0$.
Specifically, we  have $0= \sum_k \mathbb{E}_0 [ L^\ast_k L_k ] = \sum_k
\| L_k e_0 \|^2$, but this requires that $L_k \vert e_0 \rangle =0$
for all $k$.
Thus, we are left with $K \vert e_0 \rangle = -iH \vert e_0 \rangle $.
Let us take $\{ | e_n \rangle : n \geq 0 \}$ to be a complete orthonormal
basis of $\mathfrak{h}$. Then setting $X= | e_n \rangle \langle e_m | $
in (\ref{eq:phi_X}), we have
\begin{eqnarray*}
0= \langle e_0 | \mathscr{L} \big( | e_n \rangle \langle e_m |
\big) = \delta_{0n} \langle e_0 | K^\ast \, e_m \rangle + \langle
e_n | K e_0 \rangle \delta_{m0}.
\end{eqnarray*}
This implies $\langle e_n | K e_0 \rangle =0$ for all $n \neq 0$.
Hence, $e_0$ is an eigenstate of $K$, and consequently of $H$ also. 
By the centering condition, the eigenvalue is zero, as stated.
\end{proof}

\smallskip

The conditions \eqref{eq:Conditions_Frigerio} are stronger than just centering. In particular, they imply that $\langle e_0 , K \, e_0 \rangle =0$, which does not follow from centering alone. The following corollary, an equivalent version of which is also proved in \cite{TicozziTAC} (Proposition 2), makes this explicit:

\begin{corollary}
\label{cor:Frigerio85}
Under the same conditions as in Theorem \ref
{thm:Frigerio85}, every triple $G \sim (\mathbf{S},\mathbf{L},H)$, for which the generator 
$\mathscr{L} = \mathscr{L}_G$, has the property that $|e_0 \rangle$ is an eigenvector
of $K=- \frac{1}{2} \sum_k L_k^\ast L_k - iH$ and each of the $L_k$, for all $k$.
\end{corollary}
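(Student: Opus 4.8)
The plan is to argue directly from stationarity, using only the \emph{hypothesis} of Theorem~\ref{thm:Frigerio85} and not the distinguished representation produced in its proof. Recall that ``$\mathbb{E}_0$ stationary'' means $\mathbb{E}_0[\Phi_t(X)]=\mathbb{E}_0[X]$ for all $X$ and $t$, so differentiating at $t=0$ gives the representation-independent identity $\langle e_0,\mathscr{L}(X)\,e_0\rangle=0$ for every $X\in\mathfrak{B}(\mathfrak{h})$. I would then fix an arbitrary triple $G\sim(\mathbf{S},\mathbf{L},H)$ with $\mathscr{L}=\mathscr{L}_G$ (the scattering matrix $\mathbf{S}$ playing no role, by Proposition~\ref{prop:S}) and write the generator compactly as $\mathscr{L}(X)=\sum_k L_k^\ast X L_k+K^\ast X+XK$, with $K=-\tfrac12\sum_k L_k^\ast L_k-iH$.

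First I would test this identity on $X=|e_0\rangle\langle e_0|$, obtaining
\[
0=\sum_k|\langle e_0,L_k e_0\rangle|^2+2\,\text{Re}\,\langle e_0,Ke_0\rangle .
\]
Since $\langle e_0,(-iH)e_0\rangle$ is purely imaginary, $\text{Re}\,\langle e_0,Ke_0\rangle=-\tfrac12\sum_k\|L_ke_0\|^2$, so the relation becomes $\sum_k\big(\|L_ke_0\|^2-|\langle e_0,L_ke_0\rangle|^2\big)=0$. Each summand is non-negative by Cauchy--Schwarz applied to the unit vector $e_0$, hence each vanishes; the Cauchy--Schwarz equality case then forces $L_k|e_0\rangle$ to be a scalar multiple of $|e_0\rangle$, i.e. $L_k|e_0\rangle=\lambda_k|e_0\rangle$ for some $\lambda_k\in\mathbb{C}$ and every $k$. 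That is the first half of the claim.

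Next I would feed $L_ke_0=\lambda_ke_0$ back into $\langle e_0,\mathscr{L}(X)\,e_0\rangle=0$. Writing $Ke_0=\mu e_0+\phi$ with $\phi\perp e_0$ and $\mu=\langle e_0,Ke_0\rangle$, and using $\sum_k|\lambda_k|^2+2\,\text{Re}\,\mu=0$ from the previous step, the identity collapses to $\langle\phi,Xe_0\rangle+\langle e_0,X\phi\rangle=0$ for all $X$; choosing $X=|\phi\rangle\langle e_0|$ gives $\|\phi\|^2=0$, so $K|e_0\rangle=\mu|e_0\rangle$ and $|e_0\rangle$ is an eigenvector of $K$ as well. (Equivalently, completing $e_0$ to an orthonormal basis $\{e_n\}$ and testing on $X=|e_n\rangle\langle e_0|$ yields $\langle e_n,Ke_0\rangle=0$ for $n\neq0$, exactly as in the proof of Theorem~\ref{thm:Frigerio85}.)

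There is no real obstacle beyond isolating the Cauchy--Schwarz equality case; the one point worth emphasising is that the whole argument is run for an \emph{arbitrary} representative $G$ and never uses the special normalisation $Ke_0=L_ke_0=0$, which is exactly why the statement concerns every SLH representation of $\mathscr{L}$ rather than just the one of Theorem~\ref{thm:Frigerio85}. A less economical alternative would start from that distinguished representation, reduce to a minimal one, and apply Parthasarathy's classification (Theorem~\ref{thm:equiv}) together with the transformation laws (\ref{eq:transf_SLH})--(\ref{eq:transf_K}), reading off $L_k'|e_0\rangle=\beta_k|e_0\rangle$ and $K'|e_0\rangle=-(\tfrac12\sum_k|\beta_k|^2+ie)|e_0\rangle$; this also works but requires padding non-minimal triples with trivial noise channels.
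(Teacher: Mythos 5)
Your proof is correct, and it takes a genuinely different route from the one in the paper. The paper's proof starts from the distinguished centered representation constructed in Theorem \ref{thm:Frigerio85} (where $Ke_0=L_ke_0=0$) and transports the eigenvector property to other triples via the Euclidean transformation laws (\ref{eq:transf_SLH})--(\ref{eq:transf_K}); this is short, but it implicitly relies on Theorem \ref{thm:equiv}, which classifies only \emph{minimal} representations, so strictly speaking it needs exactly the padding/reduction step you flag in your last sentence. Your argument instead works directly from the representation-independent stationarity identity $\langle e_0,\mathscr{L}(X)e_0\rangle=0$ for an \emph{arbitrary} triple: testing on $X=|e_0\rangle\langle e_0|$ turns the identity into $\sum_k\bigl(\|L_ke_0\|^2-|\langle e_0,L_ke_0\rangle|^2\bigr)=0$, and the Cauchy--Schwarz equality case forces $L_ke_0=\lambda_ke_0$; feeding this back and testing on $X=|\phi\rangle\langle e_0|$ (or on $X=|e_n\rangle\langle e_0|$) kills the component of $Ke_0$ orthogonal to $e_0$. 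I have checked the two computations and they are right: the coefficient of $\langle e_0,Xe_0\rangle$ in the second step vanishes precisely because of the relation $\sum_k|\lambda_k|^2+2\,\mathrm{Re}\,\mu=0$ obtained in the first. What your approach buys is self-containedness and full generality over all (possibly non-minimal, non-centered) SLH representations at the cost of redoing a Cauchy--Schwarz argument that the paper's proof of Theorem \ref{thm:Frigerio85} already performs in the centered gauge; what the paper's approach buys is brevity and an illustration of covariance under Euclidean equivalence, which is a theme the authors use elsewhere.
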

\begin{proof}
Let us take $G$ to be the centered SLH triple in Theorem \ref{thm:Frigerio85},
and consider the triple $G^{\prime}$ obtained by the Euclidean triple in
(\ref{eq:transf_SLH}). We have $L_k^\prime \vert e_0 \rangle = \beta_k \,
\vert e_0 \rangle$, and, by (\ref{eq:transf_K}), $K^\prime \vert e_0
\rangle = - \big( \frac{1}{2} \sum_k | \beta_k |^2 + i e \big) \, | e_0
\rangle$. Dropping the primes gives the result.
\end{proof}

\medskip

\noindent
Note that the conditions that $K$ and the $L_k$ have $| e_0 \rangle$ as
an eigenvector are properties which transform covariantly under Euclidean
transformations, but the condition that $H$ has $| e_0 \rangle$ as an
eigenvector does not.

\section{Dephasing quantum Markov semigroups}
\label{sec:deph}

We say that an orthogonal projection $P$ is \textbf{invariant} under a QMS $\Phi $ if $\Phi _t (P)=P$ for all $t \ge 0$. Such a projector is further said to be \textbf{irreducible} if there is no proper sub-projection which is also invariant under the QMS. A family of mutually orthogonal projectors $\{ P_n\}$ on $\mathfrak{h}$ is \textbf{complete} if $P_n P_m = P_n \delta_{n,m}$ and $\sum_n P_n= \Eye_N$.

\begin{definition}
\label{def:dephasing}
Let $P$ and $Q$ be two invariant orthogonal projections that are mutually orthogonal to each other. We say that they \textbf{dephase} under the QMS if
\begin{eqnarray}
\lim_{t \to \infty} \Phi_t ( PXQ) =0 , \qquad \forall X \in \mathfrak{B} ( \mathfrak{h}),
\label{dephPQ}
\end{eqnarray}
where convergence is understood in matrix norm.  A QMS is said to be \textbf{dephasing} if it admits an \textbf{irreducible invariant dephasing family} of projectors, that is, a complete family $\{ P_n\}$ such that each $P_n$ is invariant and irreducible, and each pair $P_n$ and $P_m$, $n \neq m,$ is dephasing as in \eqref{dephPQ}.  The QMS is said to be \textbf{maximally dephasing} if there exists such a family where all projectors are rank one. Equivalently, for a maximally dephasing QMS, there exists an orthonormal basis $\{ |n\rangle \} $, referred to as a \textbf{stable basis}, such that $P_n = | n \rangle \langle n |$ and $\Phi_{t}\left( |n\rangle \langle m|\right) \rightarrow 0$ as $ t\rightarrow \infty $ for $ n\neq m $. 
\end{definition}

Some comments are in order. To begin with, the above dephasing notion was exemplified in \S \ref{pd} for a single-qubit model. Our definition of an irreducible invariant dephasing family coincides with the notion of dephasing introduced by
Baumgartner and Narnhofer in \cite{BNII}, who formulate in the Schr\"{o}dinger picture as
\begin{eqnarray*}
\lim_{t \to \infty} P_m \Phi^\star_t ( \rho ) P_n =0, \qquad \text{whenever } n\neq m.
\end{eqnarray*}
In the case of maximal dephasing, the stable basis recovers the concept of a pointer basis introduced by
Zurek \cite{Zurek}. Indeed, transferring to the Schr\"{o}dinger picture leads to
\begin{eqnarray*}
\Phi_t ^\star (\rho_0 ) \equiv \sum_{n,m} p_{n,m} | n \rangle \langle m |
\to \sum_n p_n | n \rangle \langle n| , \quad p_n = \langle n | \rho | n \rangle, 
\end{eqnarray*}
in the long time limit, for arbitrary initial density matrices $\rho_0$, implying that \emph{classical} information stored in the index $n$ 
is perfectly preserved \cite{IPS}.

In what follows, we shall denote the {\bf commutant} of a family of operators, say, ${\mathcal P}$, by ${\mathcal P}'$, that is, ${\mathcal P}' = \{ X \in \mathfrak{B} ( \mathfrak{h}) \,|\,  [X,P]=0, \;\forall P \in {\mathcal P} \}$.

\begin{theorem}
\label{thm:irred_dephasing}
If a QMS possesses an irreducible invariant dephasing family $\{ P_n \}$, then in any representation $(\mathbf{L},H)$ we have $ H, L_k \in \{ P_n \}^\prime$, for all $k$.
\end{theorem}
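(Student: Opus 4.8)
The plan is to reduce the whole statement to a single fact about the dissipator of $\mathscr{L}$. First I would observe that, since $\Phi_t=e^{t\mathscr{L}}$ is norm-continuous on the finite-dimensional algebra $\mathfrak{B}(\mathfrak{h})$, invariance of a projector, $\Phi_t(P_n)=P_n$ for all $t\ge 0$, is equivalent to $\mathscr{L}(P_n)=0$; this is the only consequence of the hypotheses that the argument will need. Fix an arbitrary representation $(\mathbf{L},H)$, so that $\mathscr{L}=\mathscr{L}_{(\mathbf{L},H)}$ takes the form (\ref{eq:GKSL}).

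The key step is then to evaluate $\mathscr{D}_{\mathscr{L}}(P_n,P_n)$ in two ways. Expanding the definition of the dissipator and using $P_n^2=P_n$ together with $\mathscr{L}(P_n)=0$ gives $\mathscr{D}_{\mathscr{L}}(P_n,P_n)=\mathscr{L}(P_n)-\mathscr{L}(P_n)P_n-P_n\mathscr{L}(P_n)=0$. On the other hand, the Hamiltonian term in (\ref{eq:GKSL}) is a derivation and therefore contributes nothing to the dissipator, while summing the pure-generator identity $\mathscr{D}_{\mathscr{L}}(X,X)=[X,L]^\ast[X,L]$ over the $d$ channels gives $\mathscr{D}_{\mathscr{L}}(P_n,P_n)=\sum_{k}[P_n,L_k]^\ast[P_n,L_k]$. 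Comparing the two expressions, a finite sum of positive operators equals zero, hence each summand vanishes; testing against vectors yields $\|[P_n,L_k]v\|=0$ for all $v$, so $[P_n,L_k]=0$ for every $k$ and every $n$, i.e.\ $L_k\in\{P_n\}^\prime$.

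To finish, I would handle $H$ as follows. Taking adjoints in $[P_n,L_k]=0$ and using $P_n=P_n^\ast$ shows $[P_n,L_k^\ast]=0$ as well, so every commutator term in (\ref{eq:GKSL}) built from the $L_k$ and $L_k^\ast$ annihilates $P_n$, leaving $\mathscr{L}(P_n)=-i[P_n,H]$. Since $\mathscr{L}(P_n)=0$, this forces $[P_n,H]=0$, i.e.\ $H\in\{P_n\}^\prime$; and since the representation $(\mathbf{L},H)$ was arbitrary, the claim follows for all of them.

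I do not expect a genuine obstacle here: the substantive content is the two-way evaluation of $\mathscr{D}_{\mathscr{L}}(P_n,P_n)$, and everything else is bookkeeping. The two points I would check most carefully are (i) the equivalence of invariance with $\mathscr{L}(P_n)=0$ and (ii) that the Hamiltonian part really does drop out of the dissipator. It is also worth noting that only invariance of the individual projectors is used — irreducibility, completeness, and the dephasing property in Definition \ref{def:dephasing} play no role in this particular statement, and enter only in the finer structural results that follow.
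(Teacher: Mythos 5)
Your proposal is correct and follows essentially the same route as the paper: both proofs reduce invariance to $\mathscr{L}(P_n)=0$, note that this forces $\mathscr{D}_{\mathscr{L}}(P_n,P_n)=0$, identify the dissipator as the positive sum $\sum_k [P_n,L_k]^\ast[P_n,L_k]$ to conclude $[P_n,L_k]=0$, and then deduce $[P_n,H]=0$ from the surviving Hamiltonian term. Your closing observation that only invariance of the individual projectors is used (not irreducibility, completeness, or dephasing) is also consistent with the paper's argument.
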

\begin{proof}
Suppose that an orthogonal projection $P$ is invariant, then $\mathscr{L}P=0$, and $\mathscr{D}_{\mathscr{L}}(P,P)=\mathscr{L}P-(\mathscr{L}P)P-P(\mathscr{L}P)=0$. As the dissipation, $\mathscr{D}_{\mathscr{L}}(P,P)$, vanishes we conclude that $P$ must commute with each $L_{k}$, and so $0=\mathscr{L}P\equiv -i\left[ P,H\right] $. Therefore $P$ commutes with $H$ and all the $L_{k}$. It follows that if $\left\{ P_{n}\right\} $ is an invariant family then $P_{n}$ commutes with $H$ and $L_{k}$ for all $n$ and $k$.
\end{proof}

Dephasing family of projectors arise naturally if the Hamiltonian and all the noise operators in $G\sim \left( \mathbf{S},\mathbf{L},H\right)$ are diagonal in the same spectral representation:

\begin{lemma}
\label{lem:PXP}
Suppose that we have the spectral decompositions $H \equiv \sum _n \varepsilon _n P_n $
and $L_k = \sum _n \lambda_{k,n} P_n$ for each $k=1,\ldots , d$. Then for any operator $X$,
\begin{eqnarray}
\Phi_t ( P_n XP_m ) = e ^{z_{nm}t} P_n X P_m ,
\label{eq:lem_relax}
\end{eqnarray}
where
\begin{eqnarray}
z_{nm}=\sum_{k}\Big( \lambda _{k,n}^{\ast }\lambda _{k,m}-\frac{1}{2}|\lambda
_{k,n}|^{2}-\frac{1}{2}|\lambda _{k,m}|^{2}\Big) +i (\varepsilon
_{n}-\varepsilon _{m}).
\end{eqnarray}
Moreover, the family $ \{ P_n \}$ is an invariant (not necessarily irreducible) dephasing family if and only if
$\sum_k | \lambda_{k,n} - \lambda_{k,m} |^2 >0$ for all pairs $n \neq m$.
\end{lemma}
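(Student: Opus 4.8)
The plan is to compute $\Phi_t(P_n X P_m)$ directly by solving the Heisenberg-picture differential equation $\frac{d}{dt}\Phi_t(Y) = \Phi_t(\mathscr{L}Y)$ on the rank-reduced operator $Y = P_n X P_m$, and then read off when the claimed dephasing characterization holds. First I would observe that since $H$ and every $L_k$ are simultaneously diagonal in the orthogonal family $\{P_n\}$, we have $[P_j, H] = 0$ and $[P_j, L_k] = 0$ for all $j,k$; consequently each $P_j$ is invariant ($\mathscr{L}P_j = 0$ is immediate from the formula \eqref{eq:GKSL}), so the $\{P_n\}$ form an invariant family, and it only remains to identify the pairs that dephase. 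The key algebraic step is to apply $\mathscr{L}$ to $P_n X P_m$ using $H = \sum_r \varepsilon_r P_r$, $L_k = \sum_r \lambda_{k,r} P_r$, and the orthogonality relations $P_r P_s = \delta_{rs} P_r$. Each term in \eqref{eq:GKSL} collapses: $L_k^\ast (P_n X P_m) L_k = \lambda_{k,n}^\ast \lambda_{k,m}\, P_n X P_m$, $L_k^\ast L_k (P_n X P_m) = |\lambda_{k,n}|^2 P_n X P_m$, $(P_n X P_m) L_k^\ast L_k = |\lambda_{k,m}|^2 P_n X P_m$, and $[P_n X P_m, H] = (\varepsilon_m - \varepsilon_n) P_n X P_m$. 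Collecting these yields $\mathscr{L}(P_n X P_m) = z_{nm}\, P_n X P_m$ with $z_{nm}$ exactly as stated, so $P_n X P_m$ is an eigenvector of $\mathscr{L}$ and hence $\Phi_t(P_n X P_m) = e^{z_{nm} t} P_n X P_m$, proving \eqref{eq:lem_relax}.

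Next I would analyze the real part of the exponent to settle the dephasing criterion. Writing out $\mathrm{Re}\, z_{nm} = \sum_k \big( \mathrm{Re}(\lambda_{k,n}^\ast \lambda_{k,m}) - \tfrac12|\lambda_{k,n}|^2 - \tfrac12|\lambda_{k,m}|^2 \big)$, I recognize each summand as $-\tfrac12|\lambda_{k,n} - \lambda_{k,m}|^2$, so that $\mathrm{Re}\, z_{nm} = -\tfrac12 \sum_k |\lambda_{k,n} - \lambda_{k,m}|^2 \le 0$. Hence $\|\Phi_t(P_n X P_m)\| = e^{\mathrm{Re}(z_{nm}) t}\|P_n X P_m\| \to 0$ as $t \to \infty$ for every $X$ precisely when $\mathrm{Re}\, z_{nm} < 0$, i.e.\ when $\sum_k |\lambda_{k,n} - \lambda_{k,m}|^2 > 0$; whereas if this sum vanishes then $\mathrm{Re}\, z_{nm} = 0$ and $\|\Phi_t(P_n X P_m)\|$ does not decay (take $X$ with $P_n X P_m \neq 0$, e.g.\ $X$ a partial isometry between the ranges of $P_m$ and $P_n$, which exist as soon as both projections are nonzero). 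Combining: $\{P_n\}$ is a dephasing family iff $\sum_k |\lambda_{k,n} - \lambda_{k,m}|^2 > 0$ for all $n \neq m$, which is the asserted equivalence.

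The only mildly delicate point — the "main obstacle" such as it is — is the direction asserting non-dephasing when $\sum_k |\lambda_{k,n}-\lambda_{k,m}|^2 = 0$: one must exhibit an $X$ for which $P_n X P_m \neq 0$ so that the $\|\cdot\| \to 0$ condition genuinely fails, and note that $\mathrm{Re}\, z_{nm} = 0$ makes $\|\Phi_t(P_n X P_m)\|$ constant in $t$ (the purely imaginary part $i(\varepsilon_n - \varepsilon_m)$ only contributes a phase and does not affect the norm). Since $P_n, P_m$ are both nonzero orthogonal projections, choosing $|u\rangle$ in the range of $P_n$ and $|v\rangle$ in the range of $P_m$ and setting $X = |u\rangle\langle v|$ gives $P_n X P_m = X \neq 0$, completing the argument. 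Everything else is the routine collapse of the GKSL terms under simultaneous diagonalization and the algebraic identity $\mathrm{Re}(\bar a b) - \tfrac12|a|^2 - \tfrac12|b|^2 = -\tfrac12|a-b|^2$.
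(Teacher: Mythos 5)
Your proposal is correct and follows essentially the same route as the paper: show that $P_nXP_m$ is an eigen-operator of $\mathscr{L}$ with eigenvalue $z_{nm}$ by collapsing the GKSL terms under the common spectral decomposition, exponentiate, and then read the dephasing criterion off the real part $\mathrm{Re}\,z_{nm}=-\tfrac12\sum_k|\lambda_{k,n}-\lambda_{k,m}|^2$. You merely make explicit two small points the paper leaves implicit (the term-by-term collapse and the choice of a nonzero $P_nXP_m$ in the non-dephasing direction), which is fine.
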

\begin{proof}
Under the stated assumptions, the Lindblad generator takes the form
\begin{eqnarray}
\mathscr{L}\left( P _{n}X P _{m}\right) =z_{nm}\, P_n X P _{m}.
\end{eqnarray}
Equation (\ref{eq:lem_relax}) follows automatically.
In particular, $P_n XP_m$ are the eigen-operators of the Linbladian and the $z_{nm}$ are the eigenvalues.
These numbers can be decomposed into real and imaginary parts, namely,
\begin{eqnarray*}
z_{nm}=-\frac{1}{2}\gamma _{nm}-i\omega _{nm} ,
\end{eqnarray*}
where the {\em dephasing rates} and {\em dephasing frequencies} are respectively given by
\begin{eqnarray}
\gamma _{nm} & \triangleq &\frac{1}{2}\sum_{k}\left| \lambda _{k,n}-\lambda
_{k,m}\right| ^{2},
\label{eq:dephase_rates} \\
\omega _{nm} &\triangleq  &\varepsilon _{m}-\varepsilon _{n}+\text{Im}
\sum_{k}(\lambda _{k,n}^{\ast }\lambda _{k,m}).
\label{eq:dephase_frequency}
\end{eqnarray}
We note that $z_{nm}=0$ if $n=m$, hence each $P_n$ is invariant. More generally, $z_{nm}^{\ast }=z_{mn}$,
and $\gamma _{nm}=\gamma _{mn},\omega _{nm}=-\omega _{mn}$.
If $n\neq m$, then $P_n $ and $P_m$ dephase if
and only if the dephasing rate in (\ref{eq:dephase_rates}), and hence $\sum_{k}\left| \lambda _{k,n}-\lambda _{k,m}\right| ^{2}$, is
strictly positive, as claimed.
\end{proof}

\subsection{Comparison with other definitions of dephasing}

An alternative definition of dephasing QMS is introduced by Avron \textit{et al.}, see Proposition 17 of \cite{AFGG12} and \S 2.3 of \cite{AFG13}. Accordingly, a QMS $\mathscr{L} = \mathscr{L}_{H, \mathbf{L}}$ is \textit{dephasing with respect to the Hamiltonian} $H$ if the commutant of $H$, $\{ H \}^\prime$, (strictly) contains the kernel of $\mathscr{L}$. Since in general the latter obeys $\textrm{ker}(\mathscr{L}) \supseteq \{ H, {\bf L}\}'$ (see for instance Lemma 2 in \cite{Peter2016}), and $\{ H \}' \supsetneq \{ H, {\bf L}\}'$,  Avron \textit{et al.}'s definition can be seen  to be equivalent to the requirement that, for each $k$, we have $L_k = f_k (H)$, in the usual spectral sense, for some function $f_k$. This notion of dephasing formalizes the one most commonly employed in physical settings, where a specified Hamiltonian $H$ is taken to define the ``quantization axis'' of the problem and dephasing does not induce transitions in the energy eigenbasis.

Our Theorem \ref{thm:irred_dephasing} requires that $H, L_1, \ldots , L_d \in \{ P_n \}^\prime$, which will typically be a non-commutative set: this leaves open the possibility that $[H,L_k ] \neq 0$ for some $k$. In fact, the Jacobi identity tells us that $[H, L_k ] $ belongs to $\{ P_n \}^\prime$, as do all higher commutators.

\begin{example} 
\label{twoqubits}
Consider a two-qubit system, $N=4$, with two noise inputs, $d=2$, subject to a QMS $\mathscr{L}_{(\mathbf{L},H)}$ involving 
a Heisenberg exchange Hamiltonian and single-qubit independent dephasing channels.  That is,
\begin{eqnarray*}
H= J(\sigma_x \otimes \sigma_x + \sigma_y \otimes \sigma_y + \sigma_z \otimes \sigma_z), \quad J>0, \\
L_1=\sqrt{\gamma_1}\,\sigma_z\otimes \Eye_2, \quad
L_2 = \sqrt{\gamma_2} \,\Eye_2 \otimes \sigma_z, \quad \gamma_i >0.
\end{eqnarray*}
While $H$ belongs to the commutant of the ``collective error algebra'' ${\mathcal A}_z'$ generated by the total angular momentum operator $S_z=\sigma_z \otimes \Eye_2 + \Eye_2\otimes \sigma_z$, we have $[H, L_i]\ne 0$, $i=1,2$. Thus, the QMS does {\em not} induce pure dephasing relative to the eigenbasis of $H$. Nonetheless, $S_z \in \{ H, \mathbf{L} \}^\prime$ and the projectors corresponding to different $S_z$-eigenvalues, $P_1= |e_0\rangle\langle e_0|$, $P_2= |e_1\rangle\langle e_1| + |e_2\rangle\langle e_2|$, $P_3= |e_3\rangle\langle e_3|$, where $\{ |e_j\rangle ,\, j=0,\ldots, 3\}$ denotes the standard computational basis in ${\mathbb C}^4$, form a complete irreducible invariant dephasing family under $ \mathscr{L}_{(\mathbf{L},H)}$. Hence, the latter is dephasing in the sense of Definition \ref{def:dephasing}. It is interesting to contrast this QMS to the permutation-symmetric case where $\gamma_1=\gamma_2=\gamma$ and a single collective noise input is present, $L = \sqrt{\gamma}\, S_z$. Clearly, $[H, L]=0$, hence $ \mathscr{L}_{(L,H)}$ is dephasing according to {\em both} definitions. In the collective case, two-dimensional subspace corresponding to $P_2$ ($S_z=0$) is a ``decoherence-free subspace'', with the action of $H$ on computational basis states implementing encoded single-qubit gates \cite{NJP2002}.
\end{example}

\medskip

Yet another definition of dephasing has been put forward, more recently, by Burgarth \textit{et al.} \cite{Burgarth_Decay}. For the case where the Lindbladian is pure, namely, if $\mathscr{L} = \mathscr{L}_{(L,0)} $ for a single $L$, the QMS is (maximally) dephasing if it admits a stable basis or, equivalently, if $L$ is a normal operator. It is shown that self-duality, $\mathscr{L} = \mathscr{L}^\star$, is a sufficient (though not necessary) condition for $\mathscr{L}_{L,0}$ to be maximally dephasing. For the multi-channel case where $\mathscr{L} = \sum_k \mathscr{L}_{(L_k,0)}$, the authors further show that self-duality implies the condition given in Eq. \eqref{eq:LL}, 
which in turns means that we may write $\mathscr{L} \equiv \sum_k \mathscr{L}_{(X_k,0)} + \sum_k \mathscr{L}_{(Y_k,0)}$, where $X_k = \frac{1}{2} ( L_k + L_k^\ast )$, $Y_k = \frac{1}{2i} ( L_k - L_k^\ast )$. Accordingly, self-duality of $\mathscr{L}$ implies that it is a sum of pure Linbladians that are maximally dephasing. It should, however, be emphasized that these pure Lindbladians are not required to be maximally dephasing with respect to the {\em same} stable basis, so their concept of dephasing does not imply a (common) invariant stable basis in general. In this respect, this approach does not differentiate between the two-qubit QMSs considered in Example \ref{twoqubits} (with self-duality holding up to the Hamiltonian contribution), nor between them and depolarization, \S\ref{depol}, for which no common stable basis may be found. 

In the same paper, the property of non-self duality is identified as a necessary condition for \textit{bone fide} ``decay'' (e.g., ground-state relaxation as in \S\ref{ad}). While, based on this approach, the authors conclude that decay {\em cannot} be ascribed to classical noise, as captured by a QSDE driven by a classical stochastic field, some caution is needed in interpreting this conclusion, both because it is tied to the assumed notion of dephasing and because consideration is restricted to just the diffusive Wiener class (see further discussion in \S \ref{sec:class}).  Given our dephasing notion, we will establish that dephasing processes exist, which are (also) {\em not} ascribable to ``classical noises'' and, likewise, that in cases where classical noise leads to dephasing, the generator need {\em not} to be self-dual.

\subsection{Maximal dephasing}

\subsubsection{Characterization of a maximally dephasing QMS.}
While Lemma \ref{lem:PXP} does not cover irreducibility, in the case of maximal dephasing it is easy to supply conditions:

\begin{theorem}
\label{thm:dephasing}
A QMS determined from the triple $G\sim \left( \mathbf{S},\mathbf{L},H\right) $ is maximally dephasing if and only if the operators 
$H$ and the $L_{k}$ are diagonal in the stable basis, and for all pairs $n \neq m$ in $\{ 1,\ldots, N\}$ we have $\langle n | \,  L_{k}|n \rangle  \neq \langle m | \,  L_{k} | m \rangle  $ for at least one $k\in\{ 1 , \ldots, d \}$.
\end{theorem}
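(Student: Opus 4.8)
The plan is to prove the two directions separately, using Lemma \ref{lem:PXP} and Theorem \ref{thm:irred_dephasing} to handle most of the work, and reducing everything to the rank-one case where ``irreducible'' becomes automatic. First I would establish sufficiency. Assume $H = \sum_n \varepsilon_n |n\rangle\langle n|$ and $L_k = \sum_n \lambda_{k,n} |n\rangle\langle n|$ in an orthonormal basis $\{|n\rangle\}$, with the separation condition that for each pair $n\neq m$ there is some $k$ with $\lambda_{k,n}\neq\lambda_{k,m}$. Setting $P_n = |n\rangle\langle n|$, this is precisely the hypothesis of Lemma \ref{lem:PXP} with each $P_n$ rank one, so $\{P_n\}$ is a complete family of mutually orthogonal rank-one projectors, each invariant, and the condition $\sum_k|\lambda_{k,n}-\lambda_{k,m}|^2>0$ for $n\neq m$ is exactly the separation hypothesis, giving that each pair $P_n,P_m$ dephases. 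Irreducibility is free: a rank-one projector has no proper nonzero subprojection. Hence $\{P_n\}$ is an irreducible invariant dephasing family of rank-one projectors, so the QMS is maximally dephasing with $\{|n\rangle\}$ a stable basis.

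For necessity, suppose the QMS is maximally dephasing, so there is an orthonormal basis $\{|n\rangle\}$ with $P_n=|n\rangle\langle n|$ an invariant dephasing family. By Theorem \ref{thm:irred_dephasing}, in any representation $(\mathbf L,H)$ we have $H,L_k\in\{P_n\}'$; but the commutant of a complete family of rank-one projectors $\{|n\rangle\langle n|\}$ is exactly the algebra of operators diagonal in $\{|n\rangle\}$, so $H$ and every $L_k$ are diagonal in the stable basis. Write $H=\sum_n\varepsilon_n P_n$ and $L_k=\sum_n\lambda_{k,n}P_n$ with $\lambda_{k,n}=\langle n|L_k|n\rangle$. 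Now Lemma \ref{lem:PXP} applies, and it tells us that $\Phi_t(P_nXP_m)=e^{z_{nm}t}P_nXP_m$ with $\operatorname{Re}z_{nm}=-\tfrac12\gamma_{nm}=-\tfrac14\sum_k|\lambda_{k,n}-\lambda_{k,m}|^2$. For $P_n$ and $P_m$ to dephase we need $\Phi_t(P_nXP_m)\to 0$, which forces $\operatorname{Re}z_{nm}<0$, i.e. $\sum_k|\lambda_{k,n}-\lambda_{k,m}|^2>0$; equivalently, for each pair $n\neq m$ there is at least one $k$ with $\langle n|L_k|n\rangle\neq\langle m|L_k|m\rangle$, which is the claimed condition.

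The one point requiring a little care — and the place I'd expect to spend the most attention — is the assertion that the commutant of a complete family of rank-one orthogonal projectors consists precisely of the diagonal operators, and that this statement is representation-independent: Theorem \ref{thm:irred_dephasing} guarantees diagonality of $H$ and the $L_k$ in \emph{every} SLH representation, so the ``diagonal in the stable basis'' conclusion is covariant as it must be. One should also note that the separation condition on the diagonal entries $\langle n|L_k|n\rangle$ is itself manifestly invariant under the Euclidean transformations (\ref{eq:transf_SLH}): adding scalars $\beta_k\Eye_N$ shifts all diagonal entries of $L_k$ equally and hence preserves differences, and applying a unitary mixing $T$ among the channels preserves the property ``the vectors $(\lambda_{k,n})_k$ and $(\lambda_{k,m})_k$ differ.'' This confirms that the theorem's characterization is a genuine property of the QMS rather than an artifact of the chosen generator, completing the argument.
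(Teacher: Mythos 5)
Your proof is correct and follows essentially the same route as the paper: both directions reduce to the diagonality of $H$ and the $L_k$ in the stable basis and then invoke Lemma \ref{lem:PXP}, with irreducibility being automatic for rank-one projectors. The only difference is that for necessity you derive diagonality from Theorem \ref{thm:irred_dephasing} (the commutant of the rank-one family $\{P_n\}$ being the diagonal algebra), whereas the paper cites Corollary \ref{cor:Frigerio85}; your route is marginally more direct since it works entirely in the Heisenberg picture and does not require first observing that each $|n\rangle\langle n|$ is also a Schr\"odinger-stationary state.
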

\begin{proof}
By Corollary \ref{cor:Frigerio85}, if $G\sim \left( \mathbf{S},\mathbf{L},H\right) $ is a triple representing the QMS, each $|n\rangle $ is an eigenvector of $K=-\left( \frac{1}{2}\sum_{k}L_{k}^{\ast }L_{k}+iH\right)  $ and $L_{k}$ for every $k$. Thus, we may write
\begin{eqnarray}
L_{k}\equiv \sum_{n=1}^N\lambda _{k,n}\,|n\rangle \langle n |,\quad K\equiv \sum_{n=1}^N\kappa _{n}\,|n\rangle \langle n|,
\quad  \lambda _{k,n}, \, \kappa _{n} \in {\mathbb C}, \, \forall n.
\label{eq:L_max_decomp}
\end{eqnarray}
By assumption, we also have
\begin{eqnarray}
H\equiv \sum_{n=1}^N \varepsilon _{n}\,|n\rangle \langle n| , \quad \quad  \varepsilon_n \in {\mathbb R}, \, \forall n,
\label{eq:H_max_decomp}
\end{eqnarray}
whereby it follows that $\kappa _{n}=-\frac{1}{2}\sum_{k}\left| \lambda _{k,n}\right| ^{2}-i\varepsilon _{n}$.
With this prescription, we see that Lemma \ref{lem:PXP} applies and we find $\Phi _{t}\left( |n\rangle \langle m|\right)
=e^{z_{nm}t}|n\rangle \langle m|.$ If $n\neq m$, then $|n\rangle $ and $|m\rangle $ dephase if
and only if  $\sum_{k}\left| \lambda _{k,n}-\lambda _{k,m}\right| ^{2}>0$,
which is equivalent to the stated condition upon noticing that $\lambda _{k,n}=\mathbb{E}_{n}\left[ L_{k}
\right] $, the expectation of $L_k$ for pure state $|n \rangle$.
\end{proof}

\medskip

It is convenient to collect the coefficients $\lambda_{k,n}$ into a matrix
\begin{eqnarray}
\begin{array}{cc}  L_1 & \rightarrow \\ L_2 & \rightarrow \\
L_3 & \rightarrow \\
\vdots & \dots \\
L_d & \rightarrow \end{array}
\left[ \begin{array}{ccccc}
\lambda_{11} &  \lambda_{12} & \lambda_{13}  & \dots & \lambda_{1N}\\
\lambda_{21} &  \lambda_{22} & \lambda_{23}  & \dots & \lambda_{2N}\\
\lambda_{31} &  \lambda_{32} & \lambda_{33}  & \dots & \lambda_{3N}\\
\vdots &  \vdots & \vdots & \vdots & \vdots\\
\lambda_{d1} &  \lambda_{d2} & \lambda_{d3}  & \dots & \lambda_{dN}\\
\end{array}
\right] \equiv F.
\label{eq:F_matrix}
\end{eqnarray}
The $k$th row of $F$ corresponds to the operator $L_k = \sum_{n=1}^N \lambda_{k,n} | n \rangle \langle n |$. 
However, we may
also focus on the column vectors:
\begin{eqnarray}
F \equiv  \left[ | \boldsymbol{\mathbf{ \lambda} }_1 \rangle , \ldots ,
| \boldsymbol{\mathbf{\lambda}}_N \rangle \right] , \qquad
| \boldsymbol{\mathbf{\lambda }}_n  \rangle
\equiv
\left[
\begin{array}{c}
\lambda_{1,n} \\
\vdots \\
\lambda_{d,n}\\
\end{array}
\right] \in \mathbb{C}^ d \equiv \mathfrak{K}.
\end{eqnarray}

For a maximal dephasing QMS, we therefore have that $\{ P_n \}^\prime$ consists of the commutative set of operators diagonal in 
the stable basis. In particular, the relations (\ref{eq:L_max_decomp}) and (\ref{eq:H_max_decomp}) arrived at in Theorem 
\ref{thm:dephasing} imply that the $L_k$ and $H$ may be thought of as functions of a common observable 
$Q= \sum_n q_n |n \rangle \langle n |$. If we may take $Q$ to be $H$,  we recover exactly Avron {\em et al.}'s definition 
in \cite{AFG13}. The maximally dephasing condition from Theorem \ref{thm:dephasing} requires that for all $n \neq m$,
$\lambda_{k,n}  \neq \lambda_{k,m}$, for at least one $k$. In fact, we see that the dephasing damping rates $\gamma_{nm}$ 
are half the length-squared of the vector $| \boldsymbol{\mathbf{\lambda}}_n  \rangle - | \boldsymbol{\mathbf{\lambda}}_m  \rangle$ so the condition that these do not vanish for any $n \neq m$ is just that the set $ \left\{ | \boldsymbol{\mathbf{\lambda}}_1 \rangle , \ldots ,
| \boldsymbol{\mathbf{\lambda}}_N \rangle \right\}$ consists of $N$ distinct (though possibly linearly dependent) vectors.

\subsubsection{Rank of a maximally dephasing QMS.}

Suppose we have a pure QMS, that is, one with rank $d=1$. Is it possible for it to realize a maximally dephasing QMS 
on a system with an $N$-dimensional Hilbert space? In this case the matrix $F$ in (\ref{eq:F_matrix}) is simply
$F= [ \lambda_{11} , \ldots , \lambda_{1N} ]$ and our condition for dephasing is that $| \lambda_{1n} - \lambda_{1m} | \neq 0$ 
for all $n \ne m$, which just means that the complex numbers $\lambda_{11} , \ldots ,  \lambda_{1 N}$ are all distinct. If so, 
the corresponding QMS will be a rank-1 maximally dephasing QMS.

More generally, given the class of maximally dephasing QMSs, we can ask for the maximal rank possible.
To this end, consider a rank-$d$ minimal representation with coupling operators $\{ L_1 , \cdots , L_d \}$.
As the set $\{  \Eye_N, L_1, \cdots, L_d\}$ needs to be linearly independent, the extended matrix
$\tilde F$ defined by
\begin{eqnarray*}
\begin{array}{cc}\Eye_N & \rightarrow \\ L_1 & \rightarrow \\ L_2 & \rightarrow \\
L_3 & \rightarrow \\
\vdots & \dots \\
L_d & \rightarrow \end{array}
\left[ \begin{array}{ccccc}
1 & 1 &  1  & \dots & 1 \\
\lambda_{11} &  \lambda_{12} & \lambda_{13}  & \dots & \lambda_{1N}\\
\lambda_{21} &  \lambda_{22} & \lambda_{23}  & \dots & \lambda_{2N}\\
\lambda_{31} &  \lambda_{32} & \lambda_{33}  & \dots & \lambda_{3N}\\
\vdots &  \vdots & \vdots & \vdots & \vdots\\
\lambda_{d1} &  \lambda_{d2} & \lambda_{d3}  & \dots & \lambda_{dN}\\
\end{array}
\right] \equiv \tilde F
\end{eqnarray*}
must have all its $1+d$ rows linearly independent. 
We therefore must have $d +1\leq N$, so the upper limit on the rank must be $N-1$.

\begin{example}
Consider $N=3$  (a qutrit) with $d=2$ noise inputs determined by
\begin{eqnarray*}
| \boldsymbol{\mathbf{\lambda}}_1 \rangle =
\left[ \begin{array}{c}
1\\
2\\
\end{array}
\right], \qquad
| \boldsymbol{\mathbf{\lambda}}_2 \rangle =
\left[ \begin{array}{c}
2\\
4\\
\end{array}
\right], \qquad
| \boldsymbol{\mathbf{\lambda}}_3 \rangle =
\left[ \begin{array}{c}
1\\
0\\
\end{array}
\right],
\end{eqnarray*}
so that $L_1 = | 1 \rangle \langle 1|  + 2 |2 \rangle \langle 2| + |3 \rangle \langle 3|$,
$L_1 = 2| 1 \rangle \langle 1|  + 4 |2 \rangle \langle 2| $. Here $ L_1, L_2$ and the identity $\Eye_3$ are linearly independent 
and the vectors $ | \boldsymbol{\mathbf{\lambda}}_1 \rangle,| \boldsymbol{\mathbf{\lambda}}_2 \rangle,| \boldsymbol{\mathbf{\lambda}}_3 \rangle$ are distinct, 
though not linearly independent. Indeed, $| \boldsymbol{\mathbf{\lambda}}_2 \rangle= 2 |\boldsymbol{ \mathbf{\lambda}}_1 \rangle$.
In this example we have maximal dephasing, and the largest rank $d=3-1$ possible.
\end{example}

\section{Hamiltonian obstruction} 
\label{sec:obs}

For a maximally dephasing QMS, an essential role in establishing Theorem \ref{thm:dephasing} is played by the
dephasing rates, introduced in  (\ref{eq:dephase_rates}). We now turn our attention to the dephasing frequencies $\omega_{nm}$ in
(\ref{eq:dephase_frequency}). First, we show that the set of frequencies $\left\{ \omega_{nm}:n,m\right\}$ are {\em not}, 
in general, attributable to a Hamiltonian term.

To this end, we note that if $z=x+iy$ and $z^{\prime }=x^{\prime }+iy^{\prime }$
are a pair of complex numbers, then $\text{Im} \, \left\{ z^{\ast }z^{\prime }\right\}
=xy^{\prime }-yx^{\prime }$, which geometrically is the (signed) area of the
parallelogram in the complex plane with vertices $0,z,z^{\prime},z+z^{\prime }$. The $d$ coupling operators,
$L_{k}= \sum_{n}\lambda _{k,n}\,|n\rangle \langle n|$, give rise to $N$ vectors $|\boldsymbol{\mathbf{\lambda}}_n \rangle
\in \mathfrak{K} = \mathbb{C}^{d }$. The real values that enter the definition of $\omega_{nm}$, namely,
\begin{eqnarray}
A_{nm} = \text{Im} \sum_{k=1}^d (\lambda _{k,n}^{\ast }\lambda _{k,m})
\equiv \text{Im} \, \langle \boldsymbol{\mathbf{\lambda}}_n | \boldsymbol{\mathbf{\lambda}}_m \rangle ,
\label{eq:symplectic_area}
\end{eqnarray}
are then a sum of $d $ signed areas. (Note that the last expression in
(\ref{eq:symplectic_area}) is an inner product in $\mathfrak{K} =\mathbb{C}^d$,
not the Hilbert space $\mathfrak{h} =\mathbb{C}^N$ of the system.)
We may think of this as the symplectic area of the corresponding parallelogram in $\mathbb{C}^{d }$.

\begin{remark}
\label{rem:obstruction}Let $\left\{ \omega _{nm}:n,m\right\} $ be the dephasing frequencies appearing in 
(\ref{eq:dephase_frequency})  (i.e., the imaginary parts of the generator's eigenvalues).
Then they generally do not take the form $\omega _{nm}=\omega _{m}-\omega _{n}$ for a fixed set of real numbers 
$\left\{ \omega _{n}:n\right\} $.
To see this, assume we did have the form $\omega _{nm}=\omega _{m}-\omega _{n}$. Then if
$n,m,l$ are distinct, the identity $\Delta _{nml}=\omega_{nm}+\omega _{ml}+\omega _{ln}=0$ must hold,
whereas we obtain
\begin{eqnarray}
\hspace*{-5mm}
\Delta _{nml} =
\sum_{k}{\rm{Im}}\left( \lambda _{k,n}^{\ast }\lambda_{k,m}+\lambda _{k,m}^{\ast }\lambda _{k,l}+\lambda _{k,l}^{\ast }\lambda_{k,n}\right) = A_{nm} + A_{ml} + A_{ln}  ,
\label{eq:Delta}
\end{eqnarray}
which is non-vanishing in general.
\end{remark}

\medskip

The right hand-side in equation (\ref{eq:Delta}) has an intrinsic geometrical meaning: it is the symplectic area of the 
triangle in $\mathbb{C}^{d}$
with vertices at $\lambda _{n},\lambda _{m},\lambda _{l}$. Thus, vanishing of the
{\textbf{Hamiltonian obstruction}}, $\Delta_{nml}$, is a {\em necessary} condition for a set of frequencies
$\omega_{nm}$ to stem from an Hamiltonian term of the form $\sum_{n}\omega_{n}|n\rangle \langle n|$.

\begin{example}
Let us take $\mathfrak{h}=\mathbb{C}^{3}$ and choose $d =3$ coupling
operators, $L_{k}\equiv \sum_{n}\lambda _{k,n}\,|n\rangle \langle n|$, $k=1,2,3$, with
\begin{eqnarray*}
| \boldsymbol{\mathbf{\lambda }}_{1} \rangle =
\left[
\begin{array}{c}
1 \\
0 \\
2i\\
\end{array}
\right] ,\quad
| \boldsymbol{\mathbf{\lambda }}_{2} \rangle =
\left[
\begin{array}{c}
i \\
0 \\
1\\
\end{array}
\right] ,\quad
| \boldsymbol{\mathbf{\lambda }}_{3} \rangle =
\left[
\begin{array}{r}
2 \\
0 \\
-1 \\
\end{array}
\right] ,
\quad
| \boldsymbol{\mathbf{\lambda }}_{0} \rangle =
\left[
\begin{array}{c}
1 \\
1 \\
1 \\
\end{array}
\right] .
\end{eqnarray*}
(We include $| \boldsymbol{\mathbf{\lambda }}_{0} \rangle$ as this allows us to construct the identity operator 
as $\Eye_3=\sum_{n} |n\rangle \langle n|$.)
We readily see that the dephasing rates $\gamma_{12},\gamma_{23},\gamma_{31}$ are all
positive-definite so we have maximal dephasing; however, $\Delta _{123}= -5 \ne 0$.
This example is not minimal as $\big\{ | \boldsymbol{\mathbf{\lambda}}_0 \rangle , | \boldsymbol{\mathbf{\lambda }}_{1} \rangle,
| \boldsymbol{\mathbf{\lambda}}_{2} \rangle , | \boldsymbol{\mathbf{\lambda }}_{3} \rangle \big\} $ is clearly over-complete.
\end{example}

\noindent
An obstruction may also arise for a maximally dephasing QMS with a minimal representation, 
as the next example shows.

\begin{example}
Since operators $L_k$ are identified with vectors, taking account of the identity operator
we need at least 3 operators $L_k$, hence the Hilbert space must have dimension at least $N=4$.
Indeed, if one considers
\begin{eqnarray*}
|\boldsymbol{\mathbf{\lambda}}_0 \rangle =
\left[
\begin{array}{c}
1\\
1\\
1\\
1\\
\end{array}
\right] , \quad
| \boldsymbol{\mathbf{\lambda}}_1 \rangle =
\left[
\begin{array}{r}
1\\
 i\\
-1\\
 -i\\
\end{array}
\right] , \quad
| \boldsymbol{\mathbf{\lambda}}_2 \rangle =
\left[
\begin{array}{r}
1\\
-1\\
i\\
-i\\
\end{array}
\right] , \quad
| \boldsymbol{\mathbf{ \lambda}}_3 \rangle =
\left[
\begin{array}{r}
1\\
-1\\
 -1\\
 1\\
\end{array}
\right],
\end{eqnarray*}
the four vectors $|\boldsymbol{\mathbf{\lambda}}_0\rangle$ (corresponding to $\Eye_4$),
$|\boldsymbol{\mathbf{\lambda}}_1\rangle, |\boldsymbol{\mathbf{\lambda}}_2\rangle, |\boldsymbol{\mathbf{\lambda}}_3\rangle$ are linearly independent and $\Delta_{123}= -2 \neq 0$.
\end{example}

\begin{lemma}
\label{lem:obstruction}
The obstruction $\Delta_{nml}$ in (\ref{eq:Delta}) is unchanged under a Euclidean equivalence 
transformation.
\end{lemma}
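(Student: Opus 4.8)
The plan is to combine the explicit transformation law (\ref{eq:transf_SLH}) for the coupling operators under a Euclidean equivalence with the geometric meaning of $\Delta_{nml}$ as a symplectic area. First I would record that a Euclidean transformation sends $L_j \mapsto L_j' = \sum_l T_{jl} L_l + \beta_j \Eye_N$ with $\mathbf{T}=[T_{jl}]$ unitary; since in the maximally dephasing setting each $L_l$ is diagonal in the (intrinsic) stable basis $\{|n\rangle\}$ and $\Eye_N=\sum_n|n\rangle\langle n|$, the transformed operators $L_j'$ remain diagonal in the same basis, so that the coefficients $\lambda_{j,n}'$ — and hence $\Delta_{nml}'$ — are well defined and still indexed by the same $\{|n\rangle\}$. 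In vector form this transformation reads $|\boldsymbol{\mathbf{\lambda}}_n'\rangle = \mathbf{T}\,|\boldsymbol{\mathbf{\lambda}}_n\rangle + \boldsymbol{\beta}$, where $\boldsymbol{\beta}\in\mathfrak{K}$ is the vector with components $\beta_k$; the real shift $e$ and the Hamiltonian-dependent terms in (\ref{eq:transf_SLH}) are irrelevant, as $\Delta_{nml}$ involves only the $\lambda$'s.

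Next I would substitute $|\boldsymbol{\mathbf{\lambda}}_n'\rangle = \mathbf{T}|\boldsymbol{\mathbf{\lambda}}_n\rangle+\boldsymbol{\beta}$ into $\Delta_{nml}' = \mathrm{Im}\big(\langle\boldsymbol{\mathbf{\lambda}}_n'|\boldsymbol{\mathbf{\lambda}}_m'\rangle + \langle\boldsymbol{\mathbf{\lambda}}_m'|\boldsymbol{\mathbf{\lambda}}_l'\rangle + \langle\boldsymbol{\mathbf{\lambda}}_l'|\boldsymbol{\mathbf{\lambda}}_n'\rangle\big)$ and expand each inner product. Unitarity of $\mathbf{T}$ gives $\langle\mathbf{T}\boldsymbol{\mathbf{\lambda}}_a|\mathbf{T}\boldsymbol{\mathbf{\lambda}}_b\rangle=\langle\boldsymbol{\mathbf{\lambda}}_a|\boldsymbol{\mathbf{\lambda}}_b\rangle$, so the ``quadratic'' part of the expansion reproduces $\Delta_{nml}$ verbatim. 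What remains are three terms $\langle\mathbf{T}\boldsymbol{\mathbf{\lambda}}_a|\boldsymbol{\beta}\rangle$ with $a$ running over $\{n,m,l\}$, three terms $\langle\boldsymbol{\beta}|\mathbf{T}\boldsymbol{\mathbf{\lambda}}_a\rangle$ with $a$ again running over $\{n,m,l\}$, and three copies of $\|\boldsymbol{\beta}\|^2$. The key observation is that the first two families combine into $\sum_{a\in\{n,m,l\}}\big(\langle\mathbf{T}\boldsymbol{\mathbf{\lambda}}_a|\boldsymbol{\beta}\rangle+\langle\boldsymbol{\beta}|\mathbf{T}\boldsymbol{\mathbf{\lambda}}_a\rangle\big) = 2\,\mathrm{Re}\sum_{a}\langle\mathbf{T}\boldsymbol{\mathbf{\lambda}}_a|\boldsymbol{\beta}\rangle$, which is real, while $3\|\boldsymbol{\beta}\|^2$ is real; hence every $\boldsymbol{\beta}$-dependent contribution is annihilated by $\mathrm{Im}$, and $\Delta_{nml}'=\Delta_{nml}$.

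Finally I would add, as a remark, the coordinate-free repackaging already foreshadowed in the text: $\mathrm{Im}\langle\cdot|\cdot\rangle$ is a symplectic form on $\mathfrak{K}\cong\mathbb{R}^{2d}$, and $\Delta_{nml}$ is the signed symplectic area of the triangle with vertices $\boldsymbol{\mathbf{\lambda}}_n,\boldsymbol{\mathbf{\lambda}}_m,\boldsymbol{\mathbf{\lambda}}_l$. Such an area is manifestly invariant under a common translation of the vertices by $\boldsymbol{\beta}$ and under linear symplectomorphisms; a unitary $\mathbf{T}$ preserves the full Hermitian inner product, hence in particular its imaginary part, so a Euclidean transformation acts on the vertices by the affine symplectomorphism $|\boldsymbol{\mathbf{\lambda}}_n\rangle\mapsto\mathbf{T}|\boldsymbol{\mathbf{\lambda}}_n\rangle+\boldsymbol{\beta}$ and leaves $\Delta_{nml}$ unchanged. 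There is no genuine obstacle here: the computation is routine. The only points requiring care are the bookkeeping of the six cross-terms, the convention for which slot of the $\mathfrak{K}$-inner product is conjugate-linear (so that the $\langle\mathbf{T}\boldsymbol{\mathbf{\lambda}}_a|\boldsymbol{\beta}\rangle$ and $\langle\boldsymbol{\beta}|\mathbf{T}\boldsymbol{\mathbf{\lambda}}_a\rangle$ terms are genuinely complex conjugates of one another), and the trivial-but-worth-stating observation that a Euclidean transformation preserves simultaneous diagonalizability in the stable basis so that $\Delta_{nml}$ remains defined with the same indexing.
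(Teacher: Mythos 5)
Your proof is correct and follows essentially the same route as the paper: both expand the transformed inner products, use unitarity of $\mathbf{T}$ to preserve the quadratic part, and exploit the cyclic structure of $\Delta_{nml}$ to cancel the $\boldsymbol{\beta}$-dependent terms (the paper telescopes the differences $|\boldsymbol{\lambda}_n\rangle-|\boldsymbol{\lambda}_m\rangle$ over the cycle, while you pair the cross-terms into complex conjugates so that $\mathrm{Im}$ annihilates them --- the same cancellation in different bookkeeping). No gaps.
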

\begin{proof}
We take $L_j^\prime = \sum_k T_{jk} L_k + \beta_k$. This implies a relationship of the form
\begin{eqnarray*}
\lambda_{j,n} ^ \prime = \sum_k T_{jk} \lambda _k + \beta_k ,
\end{eqnarray*}
for each projection index $n$. Now $A_{nm} = \text{Im} \sum_{k} (\lambda _{k,n}^{\ast }\lambda _{k,m})$ is anti-symmetric, and we compute
\begin{eqnarray*}
A^\prime_{nm} = \sum_{k}\text{Im}( \lambda^{\prime \, \ast}_{k,n} \lambda^\prime _{k,m} )
=  A_{nm} - \sum_k \text{Im} \big\{ \alpha_k^\ast (\lambda_{k,n} - \lambda_{k,m} ) \big\} ,
\end{eqnarray*}
where $\boldsymbol{\mathbf{\alpha}} = \mathbf{ R} \, \boldsymbol{\mathbf{\beta}}$. Since $ \Delta _ {nml} = A_{nm} + A_{ml} + A_{ln}$, 
we find $\Delta _ {nml}^\prime  = A_{nm}^\prime  + A_{ml}^\prime + A_{ln}^\prime
= \Delta_{nml},$ which completes the proof.
\end{proof}

\medskip

Lemma \ref{lem:obstruction} shows that obstructions are fundamental and cannot be removed by using an equivalent 
Euclidean representation of the Lindbladian. We have, in particular:

\begin{theorem}
\label{thm:obstruct_normal}
Given a maximally dephasing QMS, suppose there is an obstruction (i.e., $ \Delta_{nml} \neq 0$
for some $n \neq m \neq l$). Then it is impossible to represent the generator of the QMS
in a form where all the coupling operators are self-adjoint.
\end{theorem}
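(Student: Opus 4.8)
The plan is to argue by contraposition: show that if the generator of a maximally dephasing QMS admits \emph{any} representation $G\sim(\mathbf{S},\mathbf{L},H)$ in which every coupling operator $L_k$ is self-adjoint, then all the obstructions $\Delta_{nml}$ necessarily vanish. By Lemma \ref{lem:obstruction} the quantities $\Delta_{nml}$ are invariants of the Euclidean equivalence class, so it suffices to exhibit \emph{one} representation in which they are manifestly zero; the self-adjoint representation will be that representation.

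First I would invoke Theorem \ref{thm:dephasing} (via Corollary \ref{cor:Frigerio85}), which tells us that in the hypothesized representation each $L_k$ is diagonal in the stable basis, $L_k=\sum_n\lambda_{k,n}\,|n\rangle\langle n|$. The assumption $L_k=L_k^\ast$ then forces every eigenvalue $\lambda_{k,n}$ to be \emph{real}, for all $k$ and all $n$. Consequently each vector $|\boldsymbol{\lambda}_n\rangle\in\mathfrak{K}=\mathbb{C}^d$ has only real entries, so $\langle\boldsymbol{\lambda}_n|\boldsymbol{\lambda}_m\rangle\in\mathbb{R}$ and hence $A_{nm}=\mathrm{Im}\,\langle\boldsymbol{\lambda}_n|\boldsymbol{\lambda}_m\rangle=0$ for every pair $n,m$ (cf. (\ref{eq:symplectic_area})). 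Summing three vanishing terms gives $\Delta_{nml}=A_{nm}+A_{ml}+A_{ln}=0$ for all triples, by (\ref{eq:Delta}).

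Finally I would close the loop with Lemma \ref{lem:obstruction}: since $\Delta_{nml}$ is unchanged under Euclidean equivalence, the value computed in the self-adjoint representation, namely $0$, is the value in \emph{every} representation. This contradicts the hypothesis that $\Delta_{nml}\neq0$ for some $n\neq m\neq l$. Hence no self-adjoint representation can exist, which is the assertion of the theorem.

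I do not expect a serious obstacle here — the argument is essentially a two-line observation once the right facts are assembled. The only point requiring a little care is the logical structure: one must be explicit that a self-adjoint representation, if it existed, would be \emph{one particular} Euclidean representative, and that Lemma \ref{lem:obstruction} then propagates the vanishing of $\Delta_{nml}$ to all representatives — this is what rules out the obstruction being a mere artifact of a bad choice of $\mathbf{L}$. (One should also note in passing that the Hamiltonian $H$ plays no role in $\Delta_{nml}$, so ``all couplings self-adjoint'' is the only hypothesis being used, consistent with the fact that $H$ may itself always be taken diagonal here.)
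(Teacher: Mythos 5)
Your proposal is correct and follows essentially the same route as the paper's own proof: assume all $L_k$ self-adjoint (hence diagonal with real eigenvalues in the stable basis), conclude $A_{nm}=0$ and thus $\Delta_{nml}=0$, and invoke Lemma \ref{lem:obstruction} to propagate the vanishing to every Euclidean-equivalent representation, contradicting the assumed obstruction. The extra care you take with the contrapositive structure and with noting that $H$ plays no role only makes explicit what the paper leaves implicit.
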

\begin{proof}
Let us suppose that $L_k^*=L_k$ for all $k$, then $A_{nm} \equiv 0$ for all $n,m$. In this case, $ \Delta_{nml} = 0$ for all $n \neq m \neq l$. This implies that $\Delta_{nml}$ would vanish identically, so having all $L_k$ self-adjoint leads to zero obstruction. Moreover, by Lemma \ref{lem:obstruction}, any Euclidean equivalent model will also have $\Delta \equiv 0$.
\end{proof}

\begin{remark} Note that for any maximally dephasing QMS that is obstruction-free,
we can find a representation of the generator in which all inner products $\left\langle \boldsymbol{\mathbf{\lambda}}_{n}|
\boldsymbol{\mathbf{\lambda}}_{m}\right\rangle$ are real.
Indeed, multiplying vectors $|\boldsymbol{\mathbf{\lambda}}_{n} \rangle $ for 
$n>1$ by an appropriate phase $\hbox{\rm e}^{\mathrm{i}\theta_j}$,
we can make $\left\langle \boldsymbol{\mathbf{\lambda}}_{ 1}|\boldsymbol{\mathbf{\lambda}}_{n}\right\rangle$ real. Consequently, if the obstruction vanishes,
\begin{eqnarray*}
\hspace*{-5mm}
 A_{nm} =  \mathrm{Im} \left\langle \boldsymbol{\mathbf{\lambda}}_{n}|
\boldsymbol{\mathbf{\lambda}}_{m}\right\rangle =
\mathrm{Im} \left  \langle  \boldsymbol{\mathbf{\lambda}}_{n}| \boldsymbol{\mathbf{\lambda}}_{m}\right\rangle +
\mathrm{Im} \left\langle \boldsymbol{\mathbf{\lambda}}_{m}| \boldsymbol{\mathbf{\lambda}}_{1}\right\rangle
+\mathrm{Im} \left\langle \boldsymbol{\mathbf{\lambda}}_{1}| \boldsymbol{\mathbf{\lambda}}_{n}\right\rangle
=0, \quad \forall n,m.
\end{eqnarray*}
\end{remark}

Exploiting the above remark, we can prove a converse to Theorem \ref{thm:obstruct_normal}
(in the finite-dimensional case we are presently considering):

\begin{theorem}
\label{thm:no_obstr_iff_selfadjoint}
Given a maximally dephasing QMS, suppose there is no obstruction (i.e., $\Delta_{nml} = 0$
for all $n \neq m \neq l$). Then it is possible to represent the generator of the QMS
in a form where all the coupling operators are self-adjoint.
\end{theorem}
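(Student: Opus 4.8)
The plan is to lean on Theorem~\ref{thm:dephasing}: for a maximally dephasing QMS, in \emph{every} representation $(\mathbf{L},H)$ both $H=\sum_n\varepsilon_n|n\rangle\langle n|$ and $L_k=\sum_n\lambda_{k,n}|n\rangle\langle n|$ are diagonal in the stable basis $\{|n\rangle\}$, and by Lemma~\ref{lem:PXP} the generator acts on the operator basis $\{|n\rangle\langle m|\}$ of $\mathfrak{B}(\mathfrak{h})$ simply by $\mathscr{L}(|n\rangle\langle m|)=z_{nm}|n\rangle\langle m|$, with
\[
z_{nm}=\langle\boldsymbol{\lambda}_n|\boldsymbol{\lambda}_m\rangle-\tfrac{1}{2}\|\boldsymbol{\lambda}_n\|^2-\tfrac{1}{2}\|\boldsymbol{\lambda}_m\|^2+i(\varepsilon_n-\varepsilon_m).
\]
Hence the generator is completely determined by the numbers $\{z_{nm}\}$, and it suffices to exhibit \emph{real} coefficient vectors $\boldsymbol{\mu}_n\in\mathbb{R}^{d'}$ and real scalars $\varepsilon_n'$ producing exactly the same $z_{nm}$: the corresponding diagonal operators $L_k'=\sum_n\mu_{k,n}|n\rangle\langle n|$ are then self-adjoint, and $\mathscr{L}_{(\mathbf{L}',H')}$ agrees with the original generator on a spanning set, hence everywhere. (The representation so obtained need not be minimal, but minimality is not required here.)

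I would handle the real part of $z_{nm}$ first, since it encodes the dephasing rates. Split each coupling as $L_k=X_k+iY_k$ with $X_k=\tfrac{1}{2}(L_k+L_k^\ast)$, $Y_k=\tfrac{1}{2i}(L_k-L_k^\ast)$ --- equivalently, identify $\mathfrak{K}=\mathbb{C}^d$ isometrically with $\mathbb{R}^{2d}$ and let $\boldsymbol{\mu}_n$ be the image of $\boldsymbol{\lambda}_n$. Then $\langle\boldsymbol{\mu}_n,\boldsymbol{\mu}_m\rangle_{\mathbb{R}^{2d}}=\mathrm{Re}\,\langle\boldsymbol{\lambda}_n|\boldsymbol{\lambda}_m\rangle$ and $\|\boldsymbol{\mu}_n\|=\|\boldsymbol{\lambda}_n\|$, so the $2d$ self-adjoint operators $X_1,Y_1,\dots,X_d,Y_d$ already reproduce $\mathrm{Re}\,z_{nm}=-\tfrac{1}{2}\sum_k|\lambda_{k,n}-\lambda_{k,m}|^2$ for every pair $n,m$.

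It remains to match the imaginary parts, and this is where the hypothesis enters. Because the $\boldsymbol{\mu}_n$ are real, the new triple has $\mathrm{Im}\,z_{nm}'=\varepsilon_n'-\varepsilon_m'$, whereas the original has $\mathrm{Im}\,z_{nm}=A_{nm}+(\varepsilon_n-\varepsilon_m)$ with $A_{nm}=\mathrm{Im}\,\langle\boldsymbol{\lambda}_n|\boldsymbol{\lambda}_m\rangle$ as in \eqref{eq:symplectic_area}. So one needs real numbers $\varepsilon_n'$ with $\varepsilon_n'-\varepsilon_m'=b_{nm}$, where $b_{nm}\triangleq A_{nm}+(\varepsilon_n-\varepsilon_m)$. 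Such $\{\varepsilon_n'\}$ exist iff $b$ is antisymmetric (immediate) and satisfies the cocycle identity $b_{nm}+b_{ml}+b_{ln}=0$ on distinct triples; but the $\varepsilon$-terms telescope to zero while the $A$-terms sum to $\Delta_{nml}$ of \eqref{eq:Delta}, which vanishes by the no-obstruction hypothesis. I would then set $\varepsilon_n'\triangleq b_{n1}$ for a fixed reference index $1$, put $H'\triangleq\sum_n\varepsilon_n'|n\rangle\langle n|$, and verify $\varepsilon_n'-\varepsilon_m'=b_{nm}$ using antisymmetry and the cocycle identity with third index $1$. With these choices $z_{nm}'=z_{nm}$ for all $n,m$, so $(\mathbf{L}',H')=\big((X_1,Y_1,\dots,X_d,Y_d),H'\big)$ is a representation of the same generator with all couplings self-adjoint.

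All the computations above are elementary; the one conceptual point --- and the step I would single out as the crux --- is recognizing that the vanishing of the Hamiltonian obstruction $\Delta_{nml}$ is \emph{exactly} the cocycle (coboundary) condition needed to reabsorb the frequencies $\{A_{nm}\}$ into a diagonal Hamiltonian $\sum_n\varepsilon_n'|n\rangle\langle n|$ once the couplings have been made self-adjoint. This dovetails with Theorem~\ref{thm:obstruct_normal}: a nonzero $\Delta$ would obstruct precisely this last step.
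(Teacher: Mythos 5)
Your proof is correct, but it takes a genuinely different route from the paper's. The paper stays within Euclidean equivalence: it first notes (in the Remark preceding the theorem) that vanishing of $\Delta$ allows the Gram inner products $\langle \boldsymbol{\lambda}_n|\boldsymbol{\lambda}_m\rangle$ to be made real, and then runs a Gram--Schmidt-type argument --- a unitary rotation of the multiplicity space $\mathfrak{K}=\mathbb{C}^d$ --- to realize the $N$ vectors with real coordinates, producing $d$ self-adjoint couplings $L_k=\sum_{n\ge k}r_{kn}|n\rangle\langle n|$ and hence a \emph{minimal} self-adjoint representation. You instead double the channels, replacing each $L_k$ by its Hermitian and anti-Hermitian parts $X_k,Y_k$, which reproduces $\mathrm{Re}\,z_{nm}$ exactly, and you then reabsorb the leftover imaginary parts $A_{nm}$ into a new diagonal Hamiltonian via $\varepsilon_n'=A_{n1}+\varepsilon_n-\varepsilon_1$, the solvability of $\varepsilon_n'-\varepsilon_m'=A_{nm}+\varepsilon_n-\varepsilon_m$ being exactly the cocycle identity $A_{nm}+A_{ml}+A_{ln}=\Delta_{nml}=0$. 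What the paper's route buys is minimality (the self-adjoint representation still uses $d$ channels and is reached by a legitimate unitary on $\mathfrak{K}$, consistent with Theorem~\ref{thm:equiv}); what yours buys is transparency about \emph{why} the obstruction is the precise condition --- it is the coboundary condition for the frequencies $\{A_{nm}\}$ --- and it sidesteps the column-rephasing step of the paper's Remark, which alters the real parts of the inner products and so requires some care if one wants to remain within a single equivalence class of representations of the given generator. Since the theorem does not demand minimality of the self-adjoint representation --- a caveat you rightly flag --- your $2d$-channel construction fully establishes the statement.
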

\begin{proof}
We will look for a form that is minimal. Again, let $\mathfrak{h}=\mathbb{C}^N$ denote the system space and
$\mathfrak{K}=\mathbb{C}^d$ be the multiplicity space, with $d$ being the rank.
We prove now that, if all inner products
$\left\langle \boldsymbol{\mathbf{\lambda}}_n, \boldsymbol{\mathbf{\lambda}}_{m}\right\rangle$ are real,
we can find a basis of $\mathfrak{K}$ in which all the elements $\lambda_{ij}$ are real.
Recall that the matrix $F$ introduced in (\ref{eq:F_matrix}) will have $d$ linearly independent rows,
and that $d+1\le  N$. The $N$ vectors of $F$, $|\boldsymbol{\mathbf{\lambda}}_{n} \rangle$ in $\mathbb{C}^d$,  are
\begin{eqnarray*}
\left[\begin{array}{c} \lambda_{11} \\ \lambda_{21} \\
\vdots \\ \lambda_{d1} \\ \end{array} \right], \quad
\left[\begin{array}{c} \lambda_{12}  \\ \lambda_{22} \\
\vdots \\ \lambda_{d2}  \\ \end{array} \right], \quad
\left[\begin{array}{c} \lambda_{13}  \\ \lambda_{23} \\
\vdots \\ \lambda_{d3} \\ \end{array} \right], \quad \dots \quad
\left[\begin{array}{c} \lambda_{1N} \\ \lambda_{2N} \\
\vdots \\ \lambda_{dN} \\ \end{array} \right]
\end{eqnarray*}
and $d$ of them are linearly independent. Relabeling coordinates in $\mathbb{C}^d$ we
can always assume that the first $d$ are linearly independent.
Acting with a unitary on $\mathbb{C}^d$, if necessary, we can always suppose that they are
written in the form
\begin{eqnarray*}
\left[\begin{array}{c} r_1  \\ 0 \\ 0 \\ \vdots \\ 0 \\ \end{array} \right], \qquad
\left[\begin{array}{c} z_{12}  \\ z_{22} \\ z_{32}\\ \vdots \\ z_{d2}\\ \end{array} \right] , \qquad
\left[\begin{array}{c} z_{13}  \\ z_{23} \\ z_{33} \\ \vdots \\ z_{d3}\\ \end{array} \right] , \quad \dots \quad ,
\left[\begin{array}{c} z_{1N}  \\ z_{2N} \\ z_{3N} \\ \vdots \\ z_{dN}\\ \end{array} \right] ,
\end{eqnarray*}
with $r_1>0$, $z_{ij}\in\mathbb{C}$, the first $d$ vectors being linearly independent
(and the remaining ones being a linear combination of them).
If their inner products are real, they can be written in the form
\begin{eqnarray*}
\left[\begin{array}{c} r_1  \\ 0 \\ 0 \\ \vdots \\ 0 \\ \end{array} \right], \qquad
\left[\begin{array}{c} r_{2}  \\ z_{22} \\ z_{32}\\ \vdots \\ z_{d2}\\ \end{array} \right] , \qquad
\left[\begin{array}{c} r_{3}  \\ z_{23} \\ z_{33} \\ \vdots \\ z_{d3}\\ \end{array} \right] , \quad \dots \quad,
\left[\begin{array}{c} r_{N}  \\ z_{2N} \\ z_{3N} \\ \vdots \\ z_{dN}\\ \end{array} \right] ,
\end{eqnarray*}
with $r_1>0,r_2,\dots, r_N\in\mathbb{R}$.
Acting with a unitary of the form
\begin{eqnarray*}
U=\left[\begin{array}{ccccc} 1 & 0 & 0 &  \dots & 0\\
0 & \ast & \ast & \dots & \ast \\
0 & \dots & \dots & \dots & \dots \\
0 & \ast & \ast & \dots & \ast  \\
\end{array}\right],
\end{eqnarray*}
we may get the vectors
\begin{eqnarray*}
\left[\begin{array}{c} r_1  \\ 0 \\ 0 \\ \dots \\ 0\end{array} \right], \qquad
\left[\begin{array}{c} r_2  \\ s_2 \\ 0 \\ \dots \\ 0 \end{array} \right], \qquad
\left[\begin{array}{c} r_3  \\ z_{23} \\ z_{33} \\ \dots \\ z_{d3}\end{array} \right], \quad \dots \quad ,
\left[\begin{array}{c} r_N  \\ z_{2N} \\ z_{3N} \\ \dots \\ z_{dN} \end{array} \right],
\end{eqnarray*}
with $r_1>0, s_2>0,\, r_2,\dots, r_N\in\mathbb{R}$. Moreover, since inner products are
real, $z_{2j}\in\mathbb{R}$ for all $j\ge 3$, so that we have
\begin{eqnarray*}
\left[\begin{array}{c} r_{11}  \\ 0 \\ 0 \\ \dots \\ 0\end{array} \right], \qquad
\left[\begin{array}{c} r_{12}  \\ r_{22} \\ 0 \\ \dots \\ 0 \end{array} \right], \qquad
\left[\begin{array}{c} r_{13} \\  r_{23} \\ r_{33} \\ \dots \\ r_{d3}\end{array} \right], \quad \dots,
\left[\begin{array}{c} r_{1N}  \\ r_{2n} \\ r_{3N} \\ \dots \\ r_{dN} \end{array} \right],
\end{eqnarray*}
and $r_{jj}>0$ for $j=1,\dots,d$. Iterating this procedure $d$ times, it is clear that we get $N$ vectors in $\mathbb{C}^d$
with real components $r_{kn}$; additionally, the above algorithm gives us $r_{kn} =0$ for $k>n$.
This shows that, with a unitary transformation on $\mathfrak{K}$, we may represent the generator
with self-adjoint operators $L_k \equiv \sum_{n \ge k} r_{kn} | n \rangle \langle n |$, as stated.
\end{proof}

\smallskip

We may combine the above two theorems into the following characterization:
\begin{corollary}
A maximally dephasing QMS has vanishing Hamiltonian obstruction (i.e., $\Delta_{nml} = 0$ for all $n \neq m \neq l$) if
and only if there exists a representation of the generator in which all the coupling operators are self-adjoint.
\end{corollary}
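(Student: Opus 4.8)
The plan is to read off the corollary as an immediate consequence of the two preceding theorems, Theorem \ref{thm:no_obstr_iff_selfadjoint} supplying one implication and Theorem \ref{thm:obstruct_normal} the other, after first pinning down that ``vanishing Hamiltonian obstruction'' is genuinely an attribute of the QMS and not of an incidental choice of representing data $(\mathbf{L},H)$.

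For the ``only if'' direction (vanishing obstruction $\Rightarrow$ a self-adjoint representation exists) there is nothing left to prove: this is verbatim the statement of Theorem \ref{thm:no_obstr_iff_selfadjoint}, whose constructive argument even exhibits a minimal representation with $L_k \equiv \sum_{n\ge k} r_{kn}\,|n\rangle\langle n|$ and all $r_{kn}$ real. For the converse (a self-adjoint representation exists $\Rightarrow$ vanishing obstruction) I would argue directly. In the stable basis we have, by Theorem \ref{thm:dephasing}, $L_k = \sum_n \lambda_{k,n}\,|n\rangle\langle n|$ with $\lambda_{k,n} = \langle n|L_k|n\rangle$; if every $L_k$ is self-adjoint then each $\lambda_{k,n}$ is real, so every vector $|\boldsymbol{\mathbf{\lambda}}_n\rangle \in \mathfrak{K}$ has real entries and hence $\langle \boldsymbol{\mathbf{\lambda}}_n | \boldsymbol{\mathbf{\lambda}}_m\rangle \in \mathbb{R}$ for all $n,m$. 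Therefore $A_{nm} = \mathrm{Im}\,\langle \boldsymbol{\mathbf{\lambda}}_n | \boldsymbol{\mathbf{\lambda}}_m\rangle = 0$ and $\Delta_{nml} = A_{nm} + A_{ml} + A_{ln} = 0$ for every triple. Equivalently, one can simply invoke the contrapositive of Theorem \ref{thm:obstruct_normal}: a nonzero obstruction precludes any self-adjoint representation, so the existence of one forces $\Delta \equiv 0$.

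The single point requiring care, and the closest thing to an obstacle, is the well-definedness implicit in the phrase ``has vanishing Hamiltonian obstruction'': a priori $\Delta_{nml}$ is computed from a chosen triple, and one must know its vanishing does not depend on that choice. This is exactly where Lemma \ref{lem:obstruction} enters — invariance of $\Delta_{nml}$ under Euclidean equivalence — together with Parthasarathy's Theorem \ref{thm:equiv}, that any two minimal representations are Euclidean equivalent, so that $\Delta_{nml}$ is an invariant of the QMS when evaluated on any minimal representation; non-minimal representations cause no trouble because both Theorems \ref{thm:obstruct_normal} and \ref{thm:no_obstr_iff_selfadjoint} are phrased as the existence of a representation of the stated form. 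Once this is observed, the corollary follows by reading the two theorems in tandem, and I would state the proof in essentially one line referring to them.
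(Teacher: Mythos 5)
Your proposal is correct and matches the paper exactly: the paper offers no separate argument for this corollary, simply stating that it combines Theorem \ref{thm:obstruct_normal} (whose contrapositive gives ``self-adjoint representation $\Rightarrow$ no obstruction'') with Theorem \ref{thm:no_obstr_iff_selfadjoint} (which gives the converse). Your additional remark on well-definedness of the obstruction via Lemma \ref{lem:obstruction} and Theorem \ref{thm:equiv} is a careful observation the paper leaves implicit, but it does not change the substance of the argument.
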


\section{Classical noise in quantum theory}
\label{sec:class}

We now turn to the main question of determining whether a QMS may admit a classical 
unitary dilation. Specifically, we give the following definition \cite{KM87}:

\begin{definition}
\label{def:ec}
Given a QMS on the space $\mathfrak{h}={\mathbb C}^N$, an {\bf essentially 
commutative (or essentially classical) dilation} is one specified by a dilated operator algebra of the form 
$\mathfrak{B}(\mathfrak{h}) \otimes {\mathscr C}$, where $\mathfrak{B}(\mathfrak{h}) = M_N$ is 
the space of $N \times N$ matrices and ${\mathscr C}$ is a 
commutative von Neumann algebra. 
\end{definition}

\smallskip

Consider three elementary models for unitary evolutions in the presence of classical noise.
The first is deterministic: this is just the usual Schr\"{o}dinger unitary evolution,
\begin{eqnarray}
U^{\mathrm{det}}_H(t) = e^{-i H t} ,
\end{eqnarray}
where the Hamiltonian $H$ is taken to be self-adjoint. The second model is a diffusive one, driven by a Wiener process $W(t)$: namely, we take
\begin{eqnarray}
U^{\mathrm{diff}}_R(t) = e^{-iR \, W(t)} ,
\end{eqnarray}
where $R$ is self-adjoint. From the It\={o} calculus, we obtain the QSDE
\begin{eqnarray}
dU^{\mathrm{diff}}_R(t) = \Big(\!-iR  dW(t)  - \frac{1}{2} R^2 \ dt \Big) \, U^{\mathrm{diff}}_R(t) .
\label{diff}
\end{eqnarray}
Finally, we consider a model determined by a Poisson process $N_\nu (t)$: namely,
\begin{eqnarray}
U^{\mathrm{jump}}_{S, \nu}(t) = S^{N_\nu (t)} ,
\label{eq:U_jump}
\end{eqnarray}
where $S$ is taken to be unitary and $\nu >0$ is the rate of the Poisson process, so
that $ \mathbb{E} [dN_\nu (t) ] = \nu dt$. From the It\={o} calculus,
the corresponding QSDE now reads
\begin{eqnarray}
dU_{S, \nu }^{\mathrm{jump}}(t) =   (S-  \Eye_N ) \, dN_\nu (t)   \,
U_{S, \nu }^{\mathrm{jump}}(t) .
\label{jump}
\end{eqnarray}
Physically, each realization of the evolution described by (\ref{diff}) corresponds to a smooth diffusive trajectory,
whilst in case (\ref{jump}) one is effectively applying a unitary kick at random times determined by the Poisson
point process. Although the majority of work in the physics literature has focused on diffusions, models based on 
telegraph processes have also been considered, especially in the context of solid-state qubits, see e.g. 
\cite{Abel_Marquardt08}.

For each of the above cases, we obtain a QMS, $\Phi_t$, by evolving with the corresponding
$U(t)$ and averaging over the noise, that is:
\begin{eqnarray*}
\Phi_t (X) = \mathbb{E} \big[ U(t)^\ast X U(t) \big] .
\end{eqnarray*}
The respective Lindblad generators read:
\begin{eqnarray}
\mathscr{L}_H^{\mathrm{det}} (X) &\!=\!& -i [X,H ] , 
\label{Ldet}\\
\mathscr{L}_R^{\mathrm{diff}} (X) &\!=\!& - \frac{1}{2} \big[ [X,R ], R \big] ,
\label{Ldiff} \\
\mathscr{L}_{S,\nu} ^{\mathrm{jump}} (X) &\!=\!& \nu \, ( S^\ast X S -X ).
\label{Ljump}
\end{eqnarray}

\noindent  
Note that the deterministic and diffusive generators belong to the closure of the cone generated by the jump generators:
\begin{eqnarray*}
\mathscr{L}_H^{\mathrm{det}} = \frac{\partial}{\partial u} \mathscr{L}_{e^{-iHu}}^{\mathrm{jump}} |_{u=0}
=\lim_{u \to 0^+} \frac{1}{u} \mathscr{L}_{e^{-iHu}}^{\mathrm{jump}}  ,\\
\mathscr{L}_R^{\mathrm{diff}} = \frac{\partial^2}{\partial u^2} \mathscr{L}_{e^{-iRu}}^{\mathrm{jump}} |_{u=0}
=\lim_{u \to 0^+} \frac{1}{2u^2} \Big( \mathscr{L}_{e^{-iRu}}^{\mathrm{jump}}
+\mathscr{L}_{e^{+iRu}}^{\mathrm{jump}} \Big)  .
\end{eqnarray*}
 
In \cite{KM87}, K\"ummerer and Maassen show that every QMS that admits an 
essentially classical dilation as defined above also admits a Kraus representation of the form
\begin{eqnarray*}
\Phi_t (X) \equiv \int_{U(N)} \!\!V^\ast XV \, d \mu_t (V) ,
\end{eqnarray*}
with  $\{ \mu_t \}$ being a convolution semigroup of probability measures on the unitary group $U(N)$.
A well-known theorem of Hunt \cite{Hunt} then implies that the generator must be
a sum of the three elementary forms $\mathscr{L}_H^{\mathrm{det}} , \mathscr{L}_R^{\mathrm{diff}} ,
\mathscr{L}_{S, \nu}^{\mathrm{jump}}$ given in (\ref{Ldet})-(\ref{Ljump}). This result provides the
complete characterization of the possible generators for QMSs which, in their
terminology, correspond to  ``essentially classical noise''.

\medskip

In the language of quantum feedback networks \cite{GJ09}, we may concatenate $SLH$ models - that is, run them in parallel - by making use of the concatenation product defined in (\ref{concat}). In the single input case ($d=1$), we have $U_H^{\mathrm{det}}$ determined simply by $G_H^{\mathrm{det}} \sim (1,0,H)$, while for $U_R^{\mathrm{diff}}$ it suffices to take $G^{\mathrm{diff}}_{R, \theta} \sim (\Eye, e^{i \theta} R, 0 )$, where $R=R^\ast$ and $\theta \in {\mathbb R}$ is some phase. The role of this phase is to determine which quadrature process to identify as the Wiener process: this should be
\begin{eqnarray}
W(t) = ie^{i \theta} B(t)^\ast - ie^{-i\theta } B(t).
\end{eqnarray}
Finally, to obtain a jump process $U_{S,\nu}^{\mathrm{jump}}$, it is enough to notice that
for any $\xi \in \mathbb{C}$,
\begin{eqnarray}
N_\xi (t) =\Lambda (t)
+ \xi B(t)^\ast + \xi^\ast B(t) + \nu t
\end{eqnarray}
is a Poisson process with rate $\nu = | \xi |^2>0$ for the vacuum state. Thus, for instance, 
$$G^{\mathrm{jump}}_{S, \xi } \sim \Big( S-\Eye, \xi (S-\Eye),\frac{| \xi |^2}{2i} (S^\ast - S) \Big),$$
would lead to an equivalent stochastic unitary. In this case we have exactly a model of the form (\ref{eq:U_jump}) with $N_\xi$ in the Fock vacuum state being identified as the Poisson process $N_\nu$. (A neater approach would be to identify $N_\nu$ with the number process $\Lambda$ in the coherent state with constant intensity $\xi$ over the time period of interest \cite{GJ09}.) 

\medskip

We now restate the K\"ummerer-Maassen theorem in SLH language: 

\begin{theorem}[K\"{u}mmerer-Maassen \cite{KM87}]
\label{thm:KM}
A QMS with essentially classical noise will be a concatenation of the single-input models as follows:
\begin{eqnarray}
G^{\mathrm{classical}} \sim   \Big(\boxplus_j G^{\mathrm{jump}}_{S_j , \xi_j} \Big) \boxplus
\Big(  \boxplus_k G^{\mathrm{diff}}_{R_k, \theta_k } \Big)\boxplus G_H^{\mathrm{det}} ,
\label{kmt}
\end{eqnarray}
where the $S_j$ are unitary operators, the $R_k$ and $H$ are self-adjoint operators, the complex numbers $\xi_j$ determine the Poisson rates ($ \nu_j = | \xi_j |^2 $) and the $\theta_k$ are phases. (The phases of the $\xi_j$ and the phases $\theta_k$ make no contribution to the generator.)
\end{theorem}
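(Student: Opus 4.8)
The plan is to regard this statement as a repackaging of the classical K\"ummerer--Maassen classification into SLH bookkeeping, so the analytic work is imported rather than redone. Concretely the proof splits into three parts: (i) obtain the decomposition of the generator into elementary pieces; (ii) check that each elementary piece is reproduced by a single-input SLH triple of the stated form; (iii) check that running these triples in parallel via the concatenation product (\ref{concat}) produces a generator equal to the sum of the component generators, hence the advertised form (\ref{kmt}).

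For part (i) I would invoke the facts already quoted above: an essentially commutative dilation forces $\Phi_t(X)=\int_{U(N)}V^\ast X V\,d\mu_t(V)$ for a weakly continuous convolution semigroup $\{\mu_t\}$ of probability measures on the compact Lie group $U(N)$ --- equivalently, the law of a L\'evy process on $U(N)$. By Hunt's theorem \cite{Hunt} the generator of such a semigroup, pushed forward through the representation $V\mapsto(X\mapsto V^\ast XV)$ of $U(N)$ on $\mathfrak{B}(\mathfrak h)=M_N$, is the sum of a first-order (drift) term, a non-negative second-order term, and a jump term $\int(S^\ast XS-X)\,d\eta(S)$ against a L\'evy measure $\eta$, together with the usual compensator corrections. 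The first-order term is anti-Hermitian on $M_N$, hence of the form $\mathscr L_H^{\mathrm{det}}$ for a self-adjoint $H$; diagonalising the second-order coefficient matrix writes that term as a finite sum $\sum_k\mathscr L_{R_k}^{\mathrm{diff}}$ with $R_k=R_k^\ast$; and the anti-Hermitian compensator corrections to the jump term are absorbed into $H$. This is exactly \cite[Theorem 1.1.1]{KM87}, so in practice I would simply cite it here. To see that the jump part collapses to a \emph{finite} sum, note that the maps $\mathscr L_S^{\mathrm{jump}}:X\mapsto S^\ast XS-X$ all lie in the finite-dimensional space of linear operators on $M_N$; splitting $\eta$ at a neighbourhood of the identity (the near-identity part feeding the drift and diffusion via a second-order Taylor expansion) leaves an integral of a continuous map over a compact set carrying finite mass, which therefore lies in the closed convex cone generated by $\{\mathscr L_S^{\mathrm{jump}}:S\in U(N)\}$; Carath\'eodory's theorem for cones in finite dimensions then rewrites it as a finite non-negative combination $\sum_j\nu_j\,\mathscr L_{S_j}^{\mathrm{jump}}$.

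For part (ii), each elementary generator is matched by direct It\={o} calculus on the corresponding QSDE, as already carried out in the text preceding the theorem: $G_H^{\mathrm{det}}\sim(1,0,H)$ yields $\mathscr L_H^{\mathrm{det}}$; $G^{\mathrm{diff}}_{R_k,\theta_k}\sim(\Eye,e^{i\theta_k}R_k,0)$ yields $\mathscr L^{\mathrm{diff}}_{R_k}$, with $\theta_k$ only fixing which field quadrature is the Wiener process and never entering (\ref{eq:GKSL}); and $G^{\mathrm{jump}}_{S_j,\xi_j}\sim(S_j-\Eye,\xi_j(S_j-\Eye),\tfrac{|\xi_j|^2}{2i}(S_j^\ast-S_j))$ yields $\mathscr L^{\mathrm{jump}}_{S_j,\nu_j}$ with $\nu_j=|\xi_j|^2$, the Hamiltonian entry supplying precisely the compensator term and the phase of $\xi_j$ dropping out of the generator. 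For part (iii), applying (\ref{concat}) repeatedly to these triples gives a triple with block-diagonal $\mathbf S$, stacked $\mathbf L$, and Hamiltonian equal to the sum of the individual Hamiltonians; substituting this data into (\ref{eq:GKSL}) and using that cross terms $L_a^\ast L_b$ between different blocks are absent shows that the resulting generator is the sum of the component generators. Combining with part (i), the QMS generator equals $\mathscr L_H^{\mathrm{det}}+\sum_k\mathscr L^{\mathrm{diff}}_{R_k}+\sum_j\mathscr L^{\mathrm{jump}}_{S_j,\nu_j}$, which is realised by $G^{\mathrm{classical}}$ in (\ref{kmt}).

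The main obstacle lies entirely in part (i): controlling the jump part of Hunt's formula, since in general the L\'evy measure $\eta$ need not be finitely supported, so one must argue that, once the near-identity contribution has been siphoned into the diffusion and Hamiltonian terms, the residual jump generator still reduces to a finite sum. What makes this work is the finite dimensionality of $\mathfrak{B}(M_N)$ together with compactness of $U(N)$ away from the identity, so that the relevant conical hull is closed and Carath\'eodory applies. Everything else --- the It\={o} computations in (ii) and the concatenation bookkeeping in (iii) --- is routine and, indeed, is essentially already displayed in the paragraphs leading up to the theorem.
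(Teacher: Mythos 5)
Your proposal matches the paper's treatment: the paper offers no proof of this theorem at all, simply citing \cite[Theorem 1.1.1]{KM87} together with the convolution-semigroup (Kraus) representation, Hunt's theorem \cite{Hunt}, and the single-input SLH realizations of the deterministic, diffusive and jump models already displayed in the paragraphs preceding the statement --- exactly the material you assemble in parts (i)--(iii). The one caveat is that your heuristic that the near-identity part of the L\'evy measure is ``absorbed into drift and diffusion via a second-order Taylor expansion'' is not literally how the infinite-activity small-jump contribution is handled (it is controlled by showing the generator lies in the closed convex cone generated by the jump generators, whose structure is then analysed in \cite{KM87}), but since you explicitly fall back on citing that theorem the slip is immaterial to the argument.
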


\medskip

\noindent
The original version of the Theorem gives the generators of the essentially classical QMSs to have the form (with the same notation)
\begin{eqnarray}
\mathscr{L}= \sum_j \nu_j \mathscr{L}_{S_j}^{\mathrm{jump}} + \sum_k \mathscr{L}_{R_k}^{\mathrm{diff}}
+\mathscr{L}_{H}^{\mathrm{det}} .
\end{eqnarray}
Additionally, they show that this is equivalent to the generators belonging to the closure of the cone generated by the jump generators.

We readily see that there exist dilations of maximally dephasing QMSs that are {\em not} essentially classical. Indeed, suppose that
$(\mathbf{S} , \mathbf{L} , H)$ leads to a maximal dephasing QMS with stable basis $\{| n \rangle \}$. Theorem \ref{thm:dephasing} only constrains the operators $\mathbf{L}$ and $H$: that is, they must take the forms (\ref{eq:L_max_decomp}) and (\ref{eq:H_max_decomp}), respectively. The freedom to choose $\mathbf{S}$ means that we may always perturb an essentially classical maximally dephasing model corresponding to $G\sim ( \Eye, \mathbf{L} , H)$ to get a genuine non-commutative one, $G^\prime \sim (\mathbf{S} , \mathbf{L} , H)$, with a matrix $\mathbf{S}$ that is no longer a multiple of the identity operator on $\mathbb{C}^N \otimes \mathbb{C}^d$.  The addition of $\mathbf{S}$ entails adding terms involving the processes $\Lambda_{jk}(t)$ to the QSDE \eqref{eq:U_QSDE}. When the bosonic fields (as the environment) are initialized in the vacuum state, the QSDE for the essentially commutative and non-commutative dilations $G$ and $G'$ described above yields the same QMS (which is maximally dephasing for appropriate choices of $\mathbf{L}$ and $H$, {\em independently} of the choice of $\mathbf{S}$). They also produce an identical evolution of the joint state of the system and fields. However, for other initial states of the fields for which the solution of the QSDE is well-defined (e.g., in the linear span of the coherent states of the fields \cite{HP84,Partha92}), they will {\em not}, in general, yield the same joint state evolution due to the presence of the terms $L^*_{j}S_{jk}dB_k(t)$ and terms involving the processes $\Lambda_{jk}(t)$. This is illustrated in the next example for a QMS with a single decoherence channel:  

\begin{example}
Consider a single qubit with  operators $H$ and $L \neq 0 $ diagonal in some fixed orthogonal basis. The associated QMS is therefore maximally dephasing. Consider a dilation of the qubit described by the following QSDE: 
$$ dU_G(t) = \!\bigg[ \!-\!\Big( \! i H+\frac{1}{2} L^*L\Big)\,dt + dB^*(t)L -L^*S \,dB(t) + (S-I)\,d\Lambda(t)\bigg] U_G(t), $$
where $S$ is unitary and different from the qubit identity operator. The QSDE is a dilation of the QMS that is {\em not} essentially commutative since the term $(S-I)d\Lambda(t)$ does not commute with the terms  $dB^*(t)L$ and  $-L^*S dB(t)$. It can also be seen that the generator on the right hand-side of the QSDE cannot be expressed in terms of processes that are commuting with themselves and one another for any two times $s,t \geq 0$ (i.e., they are not essentially classical processes). Nonetheless, since $H$ and $L$  are diagonal, the QMS obtained from the above QSDE, by tracing out the bosonic fields in a vacuum state, will be maximally dephasing, as stated.

It is worth recalling that a maximally dephasing QMS can be recovered from a non-classical  dilation in which the field state is not vacuum, but this does not hold generally. As an example, consider fixed-amplitude coherent states $|f\rangle_{\rm coh}$ of the field, with $f$ being a non-zero constant function of time. The dilation $G=(S,L,H)$ with the field initialized in the state $|f\rangle_{\rm coh}$ is then equivalent to a dilation $G'=(S,L+Sf,H+{\rm Im}\{L^{\dag}Sf\})$ with the field initialized in the vacuum. However, $\mathscr{L}_G \neq \mathscr{L}_{G'}$; in particular, $\mathscr{L}_{G'}$ will have a dependence on $S$, whereas $\mathscr{L}_G$ does not. If $L$ and $H$ satisfy the conditions of Theorem \ref{thm:dephasing}, $S$ commutes with $L$ and $H$, and  $L+Sf$ also satisfy the condition of  Theorem \ref{thm:dephasing}, then one may show that $\mathscr{L}_{G'}$ will also be the Lindblad generator of a maximally dephasing QMS. In general, however, one does not obtain a maximally dephasing QMS for an arbitrary initialization of the field state, including for Gaussian states of the field beyond coherent states, see e.g. the second reference in [16]).  In fact, if the field is initialized in states such as single-photon or multi-photon states, the reduced evolution on the system can no longer be described by a QMS; see \cite{GJN11} and the references therein.
\end{example}

\begin{theorem}
\label{thm:ess_class}
Let $\Phi _{t}$ be a QMS that is both maximally dephasing with respect to a
stable basis $\left\{ |n\rangle :n=1,\ldots ,N\right\} $ and
essentially classical. Then
\begin{eqnarray}
\Phi _{t}\left( |n\rangle \langle m|\right) =C_{nm}(t ) \,|n\rangle \langle m| ,
\label{eq:hadamard_continuous}
\end{eqnarray}
where the coefficients take the form $C_{nm}\left( t\right) = e^{ z_{nm} t}$, and
\begin{eqnarray}
z_{nm} = \sum_{j\in J}\nu _{j}\!\left( e^{-i\left( \vartheta
_{j,n}-\vartheta _{j,m}\right) }-1\right)  -\frac{1}{2}\sum_{k\in
K}(r_{k,n}-r_{k,m})^{2} +i\left( \varepsilon _{n}-\varepsilon
_{m}\right) ,
\label{eq:C_t}
\end{eqnarray}
where the $\nu _{j}$ are positive and the parameters $\vartheta_{j,n},r_{k,n},\varepsilon _{n}$ are real.
\end{theorem}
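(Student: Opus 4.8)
The plan is to feed the essentially-classical hypothesis into the K\"ummerer--Maassen structure theorem (Theorem~\ref{thm:KM}) to pin down the form of the generator, then use the maximal-dephasing hypothesis through Theorem~\ref{thm:irred_dephasing} to force every operator appearing in that form to be diagonal in the stable basis, and finally to read off the action on the coherences $|n\rangle\langle m|$ by a direct computation. Concretely, since $\Phi_t$ is essentially classical, Theorem~\ref{thm:KM} gives $\mathscr{L}=\sum_{j\in J}\nu_j\,\mathscr{L}^{\mathrm{jump}}_{S_j}+\sum_{k\in K}\mathscr{L}^{\mathrm{diff}}_{R_k}+\mathscr{L}^{\mathrm{det}}_{H}$, with the $S_j$ unitary, the $R_k$ and $H$ self-adjoint, $\nu_j>0$, and (working in finite dimension) $J,K$ finite. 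I would then observe that this is nothing but a GKSL representation of the form (\ref{eq:GKSL}): as recorded in (\ref{Ldiff})--(\ref{Ljump}) together with the ``pure generator'' discussion of Section~\ref{sec:backg}, $\nu\,\mathscr{L}^{\mathrm{jump}}_{S}$ is the pure generator with the single coupling operator $\sqrt{\nu}\,S$, $\mathscr{L}^{\mathrm{diff}}_{R}$ is the pure generator with the self-adjoint coupling operator $R$, and $\mathscr{L}^{\mathrm{det}}_H$ is the Hamiltonian term. Hence $\mathscr{L}=\mathscr{L}_{(\mathbf{L},H)}$ with $\mathbf{L}=\{\sqrt{\nu_j}\,S_j\}_{j\in J}\cup\{R_k\}_{k\in K}$.

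Next I would invoke Theorem~\ref{thm:irred_dephasing}. Since $\Phi_t$ is maximally dephasing with stable basis $\{|n\rangle\}$, the rank-one projectors $P_n=|n\rangle\langle n|$ form an irreducible invariant dephasing family, so in the representation just constructed every $\sqrt{\nu_j}\,S_j$, every $R_k$, and $H$ lies in $\{P_n\}'$. But $\{P_n\}'$ is precisely the algebra of operators diagonal in the stable basis (the observation already used right after Theorem~\ref{thm:dephasing}), and $\sqrt{\nu_j}>0$; hence each $S_j$, $R_k$, $H$ is diagonal, and we may write $S_j=\sum_n e^{i\vartheta_{j,n}}|n\rangle\langle n|$ (the eigenvalues have unit modulus as $S_j$ is unitary), $R_k=\sum_n r_{k,n}|n\rangle\langle n|$, $H=\sum_n\varepsilon_n|n\rangle\langle n|$, with $\vartheta_{j,n},r_{k,n},\varepsilon_n\in\mathbb{R}$.

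It then remains to compute $\mathscr{L}(|n\rangle\langle m|)$ termwise. With everything diagonal one finds $S_j^\ast|n\rangle\langle m|S_j=e^{-i(\vartheta_{j,n}-\vartheta_{j,m})}|n\rangle\langle m|$, hence $\mathscr{L}^{\mathrm{jump}}_{S_j}(|n\rangle\langle m|)=\bigl(e^{-i(\vartheta_{j,n}-\vartheta_{j,m})}-1\bigr)|n\rangle\langle m|$; likewise $[|n\rangle\langle m|,R_k]=(r_{k,m}-r_{k,n})|n\rangle\langle m|$ gives $\mathscr{L}^{\mathrm{diff}}_{R_k}(|n\rangle\langle m|)=-\tfrac12(r_{k,n}-r_{k,m})^2|n\rangle\langle m|$, and $\mathscr{L}^{\mathrm{det}}_H(|n\rangle\langle m|)=i(\varepsilon_n-\varepsilon_m)|n\rangle\langle m|$. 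Summing these reproduces $\mathscr{L}(|n\rangle\langle m|)=z_{nm}\,|n\rangle\langle m|$ with $z_{nm}$ exactly as in (\ref{eq:C_t}); since the $|n\rangle\langle m|$ are thus eigen-operators of $\mathscr{L}$, exponentiating gives $\Phi_t(|n\rangle\langle m|)=e^{z_{nm}t}|n\rangle\langle m|$, i.e.\ $C_{nm}(t)=e^{z_{nm}t}$. (Alternatively, once diagonality is in hand one may quote Lemma~\ref{lem:PXP} directly.) The only step requiring genuine care -- rather than the computation, which is routine bookkeeping with diagonal matrices together with the elementary identification of $\{P_n\}'$ with the diagonal algebra -- is the bridge between the two input theorems: one must recognize the K\"ummerer--Maassen generator as a legitimate GKSL representation $(\mathbf{L},H)$ before Theorem~\ref{thm:irred_dephasing} can be applied to force diagonality of $S_j$, $R_k$ and $H$ in the stable basis.
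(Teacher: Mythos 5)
Your proof is correct, and it reaches the conclusion by a cleaner, more elementary route than the paper's. Both arguments share the same skeleton up to the midpoint: invoke K\"ummerer--Maassen to write $\mathscr{L}$ as a sum of jump, diffusion and deterministic generators, then use maximal dephasing to force $S_j$, $R_k$ and $H$ to be diagonal in the stable basis (the paper asserts this diagonality via the coupling operators $\xi_j(S_j-1)$ and $e^{i\theta_k}R_k$ together with Theorem \ref{thm:dephasing}/Corollary \ref{cor:Frigerio85}; your recognition of the K\"ummerer--Maassen form as a bona fide GKSL representation $(\mathbf{L},H)$ with $\mathbf{L}=\{\sqrt{\nu_j}S_j\}\cup\{R_k\}$, followed by Theorem \ref{thm:irred_dephasing}, is an equally valid and arguably tidier justification, and you are right to flag it as the one step needing care). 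Where you genuinely diverge is the endgame: you stay entirely at the level of the generator, checking that each $|n\rangle\langle m|$ is an eigen-operator of every summand and reading off $z_{nm}$ termwise, whereas the paper descends to the dilation, integrates the classical QSDE to the explicit diagonal unitary $U(t)=\sum_n|n\rangle\langle n|\otimes e^{-i\Theta_n(t)}$ of Eq.~(\ref{eq:U_diagonal}), and evaluates $C_{nm}(t)=\langle\Omega,e^{i(\Theta_n(t)-\Theta_m(t))}\Omega\rangle$ via the characteristic functions of independent Poisson and Wiener processes. Your route proves the stated theorem with less machinery (finite-dimensional linear algebra plus semigroup exponentiation); the paper's route buys the explicit form of the dilation unitary, which is not needed for Theorem \ref{thm:ess_class} itself but is reused immediately afterwards (in the remark on perturbing by a nontrivial $\mathbf{S}$ and in Theorem \ref{thm:dephasing_diffusion}). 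Your termwise computations of the jump, diffusion and Hamiltonian contributions to $z_{nm}$ are all correct and reproduce (\ref{eq:C_t}) exactly.
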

\begin{proof}
By the K\"{u}mmerer-Maassen theorem (\ref{kmt}), there exist two non-overlapping
subsets $J$ and $K$ whose union is $\left\{ 1,\cdots ,d\right\} $, such that
\begin{eqnarray}
dG^{\text{classical}}(t)  = &&
\!\sum_{j\in J}\left( S_{j}-1\right)
dN_{j}( t) -i\sum_{k\in K}R_{k}dW_{k}( t) \nonumber \\
 && - \Big( \frac{1}{2}\sum_{k\in K}R_{k}^{2}+iH\!\Big) dt ,
\label{Gcl}
\end{eqnarray}
with $N_{j}=\Lambda _{jj}+\xi _{j}B_{j}^{\ast }+\xi _{j}^{\ast }B_{j}+\left| \xi _{j}\right| ^{2}$ and $W_{k}=ie^{i\theta _{k}}B_{k}^{\ast }-ie^{-i\theta_{k}}B_{k}$. The quantum processes $\left\{ N_{j},W_{k}:j\in J,k\in K\right\} $ form a commuting set of self-adjoint processes. Thus, we may decompose the multiplicity space as $\mathfrak{K} = \mathfrak{K}^{\text{jump}}\oplus \mathfrak{K}^{\text{diff}} ,$ where $\mathfrak{K}^{\text{jump}}=\mathbb{C}^{J}$ and $\mathfrak{K}^{\text{diff}}=\mathbb{C}^{K}$.

Associated with the jumps and diffusion terms we have the respective coupling
operators $L_{j}^{\text{jump}}=\xi _{j}\left( S_{j}-1\right) $, for each
$j\in J$ and $L_{k}^{\text{diff}}=e^{i\theta _{k}}R_{k}$, for each $k\in K$.
If the QMS\ is to be maximally dephasing, these must be diagonal in the
stable basis, so that
\begin{eqnarray*}
S_{j}\equiv \sum_{n=1}^{N}e^{i\vartheta _{j,n}}|n\rangle \langle
n|,\qquad R_{k}\equiv \sum_{n=1}^{N}r_{k,n}|n\rangle \langle n|.
\end{eqnarray*}
(Likewise, $H\equiv \sum_{n=1}^{N}\varepsilon _{n}|n\rangle \langle
n|.$) Moreover, the set of vectors
\begin{eqnarray*}
| \boldsymbol{\mathbf{\lambda }}_{n}\rangle =\left[
\begin{array}{c}
\left[ \xi _{j}(e^{i\vartheta _{j,n}}-1)\right] _{j\in J} \\
\left[ e^{i\theta _{k}}r_{k,n}\right] _{k\in K}
\end{array}
\right]
\end{eqnarray*}
are linearly independent in $\mathfrak{K}=\mathfrak{K}^{\text{jump}}\oplus \mathfrak{K}^{\text{diff}}$. We
may therefore write
\begin{eqnarray*}
dG^{\text{classical}}( t) =\sum_{n=1}^{N}|n\rangle
\langle n|\otimes d\tilde{G}_{n}\left( t\right),
\end{eqnarray*}
in terms of a family of processes 
\begin{eqnarray*}
\tilde{G}_{n}( t)  =
\sum_{j\in J}\left( e^{i\vartheta_{j,n}}-1\right) N_{j}\left( t\right) -i\sum_{k}r_{k,n}W_{k}\left( t\right)
-\Big( \frac{1}{2}\sum_{k\in K}r_{k,n}^{2}+i\varepsilon _{n}\!\Big) t.
\end{eqnarray*}
Since we have $\left( dN_{j}\right) ^{2}=dN_{j}$ and $\left( dW_{k}\right)^{2}=dt$,
with all other products of increments vanishing, we may integrate to get
\begin{eqnarray}
U\left( t\right) \equiv \sum_{n=1}^{N}|n\rangle \langle n|\otimes e^{-i\Theta _{n}\left( t\right) } ,
\label{eq:U_diagonal}
\end{eqnarray}
where, for each $n$, 
\begin{eqnarray*}
\Theta _{n}\left( t\right) =\sum_{j\in J}\vartheta _{j,n}N_{j}\left(
t\right) +\sum_{k}r_{k,n}W_{k}\left( t\right) +\varepsilon _{n}.
\end{eqnarray*}
The processes $\{ \Theta_n \}$ all commute. It follows that the QMS obtained in this way is of the form (\ref{eq:hadamard_continuous}), 
with coefficients $C_{nm}\left( t\right) =\left\langle \Omega ,e^{i\left( \Theta _{n}(t) -\Theta _{m}( t) \right) }\,\Omega \right\rangle ,$ with $\Omega$ being, as before, the Fock vacuum. Noting that the $N_{j}$ and $W_{k}$ obey the laws of independent Poisson and Wiener processes in this state, we are lead to the desired expression (\ref{eq:C_t}).
\end{proof}

\smallskip

We see that the essentially commutative dilations of a maximally dephasing QMS have the feature that  they are diagonal in the 
stable basis, Eq.  (\ref{eq:U_diagonal}). As pointed out earlier on, however, we may perturb this with an additional matrix $\mathbf{S}$ which need not respect the stable basis so as to end up with a non-commutative dilation which still retains the maximal dephasing property.

\subsection{Essentially classical dilation by a diffusion or a jump process?}

We note that both the diffusion and jump QMS in (\ref{Ldiff}) and (\ref{Ljump}) may be expressed as 
\begin{eqnarray}
\mathscr{L}_c (X) \equiv \frac{1}{2} [ c^\ast , X ]c + \frac{1}{2} c^\ast [X,c ] ,
\label{eq:L_c}
\end{eqnarray}
with $c= \lambda \, R$, $\lambda \in \mathbb{C}$, $R$ self-adjoint,  and $c= \sqrt{\nu} \, S$, $\nu >0$, $S$ unitary, respectively.
(The transformation $c \mapsto e^{i \varphi} c+ \beta$, for $\varphi \in \mathbb{R}$ and $\beta \in \mathbb{C}$, 
leaves $\mathscr{L}_c$ unchanged).

K\"{u}mmerer and Maassen also showed, in Proposition 2.2.1 of \cite{KM87}, that a QMS with generator of the form $\mathscr{L}_c$ in (\ref{eq:L_c}) is essentially classical if and only if $c$ is normal and has a spectrum which lies either on a straight line or on a circle in the complex plane. As a corollary, a generator of the form (\ref{eq:L_c}), with $c$ normal, may be {\em both} a diffusion type or a jump type if and only if the spectrum of $c$ consists of no more than two points (since it lies in the intersection of a line and a circle).

For instance, in the paradigmatic example of pure dephasing of a qubit discussed in \S \ref{pd}, the Lindbladian (\ref{eq:qubit_dephase_Lindbladian}) can be considered as either a diffusion type (with $ \gamma = |\lambda |^2 , \, 
R = \sigma_z$) or as a jump type (with $ \nu = \gamma ,\, S= \sigma_z$). Therefore, the QMS for pure dephasing of 
a qubit can arise as the average of {\em either} a diffusive model or jump model. This is fortuitous, as the operators $R$ 
and $S$ can have only two eigenvalues each in the qubit case.

In order to see that the above scenario is far from generic, 
let us first still assume $N=2$, but consider a simple modification to the above pure-dephasing example, where
we take the scattering matrix to be 
\begin{eqnarray*}
S =
\bigg[
\begin{array}{cc}
	1 &0  \\
	0 & e^{i \vartheta}
\end{array}
\bigg], \quad e^{i \vartheta} \neq \pm 1.
\end{eqnarray*}
The generator $\mathscr{L}^{\mathrm{jump}}_{\mathbf{S}, \nu}( X)= \nu (S^\ast XS - X)$ then becomes
\begin{eqnarray*}
\mathscr{L}^{\mathrm{jump}}_{\mathbf{S}, \nu} \bigg(
\bigg[
\begin{array}{cc}
	x &y \\
	z & w
\end{array}
\bigg]
\bigg)
= \nu
\bigg[
\begin{array}{cc}
	0 & (e^{i \vartheta} -1) y \\
	(e^{-i \vartheta}-1) z & 0
\end{array}
\bigg] ,
\end{eqnarray*}
and the solution to the master equation reads
\begin{eqnarray*}
\rho (t) =
\bigg[\!
\begin{array}{cc}
	\rho_{11} (0) & e^{\nu (e^{-i \vartheta} -1) t} \rho_{10} (0)  \\
	e^{\nu  (e^{i \vartheta} -1) t} \rho_{01} (0)  & \rho_{00} (0)
\end{array}\!
\bigg].
\label{eq:dephasing_theta}
\end{eqnarray*}
In fact, the only difference compared to the usual pure-dephasing (\ref{eq:pure_dephase_rho}) is that the 
damping constant is now complex. Its real part, $\nu (\cos \vartheta -1)$, is strictly negative since
$e^{-i \vartheta}=1$ is excluded. Therefore, we once again have dephasing.

To see what is going on, observe that the corresponding unitary stochastic process has the explicit
solution determined by (\ref{eq:U_jump}), namely,
\begin{eqnarray*}
U_{S,\nu}^{\mathrm{jump}} (t) =
\bigg[
\begin{array}{cc}
	1  & 0 \\
	0 & e^{i \vartheta N_\nu (t)}
\end{array}
\bigg],
\end{eqnarray*}
with the result that
\begin{eqnarray*}
U_{S,\nu}^{\mathrm{jump}} (t)^\ast \bigg[
\begin{array}{cc}
	x &y \\
	z & w
\end{array}
\bigg]
U_{S,\nu}^{\mathrm{jump}} (t)
=\bigg[
\begin{array}{cc}
	x & e^{i \vartheta N_\nu (t)} y \\
	e^{-i \vartheta N_\nu (t)}z & w
\end{array}
\bigg].
\end{eqnarray*}
Accordingly, the dephasing in the long-time limit can be seen to be due to the random phase accumulation on off-diagonal elements. We note that the generator is bistochastic but, for $e^{i \vartheta} \neq \pm 1$, it is not self-dual (recall the discussion in \S \ref{sec:bistochastic}).

\medskip

In higher dimension, $N>2$, we may diagonalize an arbitrary unitary $\mathbf{S}$ in the general form $ \sum_n e^{-i \vartheta_n} \, |n \rangle \langle n |$, so that
\begin{eqnarray*}
\mathscr{L}_{\mathbf{S}, \nu}^{\mathrm{jump} } ( |n \rangle \langle m |) =   
- \left( 1-   e^{ i ( \vartheta_n - \vartheta_m) }\right)  |n \rangle \langle m | .
\label{eq:SS}
\end{eqnarray*}
This time, we have maximally dephasing behavior with respect to the basis provided by $\vartheta_n \neq \vartheta_m$ for $n \neq m$.
This will be self-adjoint only in the very restrictive situations discussed above. Note that $\mathscr{L}_{S, \nu }^{\mathrm{jump} }  (|n \rangle \langle n |) =0$, so the QMS is, consistently, dephasing with respect to the basis $\{ \vert n \rangle \}$.

Evidently, in the situation of classical noise leading to dephasing, the generator does {\em not} need to be self-dual.

\subsection{Classical dilations via diffusions}

To realize a maximally dephasing QMS through a diffusive dilation, take $G\sim \left( \mathbf{S}=\Eye,\mathbf{L},H\right) $,
where the coupling operators are of the form $L_{k}\equiv \sum_{n}\lambda_{k,n}\,|n\rangle \langle n|$ and the Hamiltonian
is $H\equiv \sum_{n}\varepsilon _{n}\,|n\rangle \langle n|$. The unitary has now the differential germ
\begin{eqnarray}
dG\left( t\right) &=&\sum_{k}\left( L_{k}\otimes dB_{k}\left( t\right)
^{\ast }-L_{k}^{\ast }\otimes dB_{k}\left( t\right) +K\otimes dt\right) \label{dG} \nonumber \\
&\triangleq  &\sum_{n}|n\rangle \langle n|\otimes \left\{
-idQ_{n}\left( t\right) +\kappa _{n}dt\right\} ,
\end{eqnarray}
where we have introduced the processes
\begin{eqnarray*}
Q_{n}\left( t\right) =i\sum_{k}\left\{ \lambda _{k,n}B_{k}\left( t\right)
^{\ast }-\lambda _{k,n}^{\ast }B_{k}\left( t\right) \right\} .
\end{eqnarray*}
Each of the processes $\left( Q_{n}\left( t\right) \right) _{t\geq 0}$ has 
the statistics of a Wiener process for the Fock vacuum state: $(dQ_n )^2 = \sigma_n ^2 dt$, with $\sigma_n^2 =
\sum_k | \lambda_{k,n} |^2$. However, they may not be compatible. In fact, we readily see that
\begin{eqnarray}
\left[ Q_{n}( t) ,Q_{m}( s) \right] =2i \, A_{nm} \min \left( t,s\right) ,
\end{eqnarray}
where $A_{nm}$ is the symplectic area defined in (\ref{eq:symplectic_area}).

Note that the Stratonovich form of the It\={o} QSDE (\ref{dG}) is
\begin{eqnarray*}
dU\left( t\right) =-i\bigg[
\sum_{n}|n\rangle \langle n|\otimes \Big( dQ_{n}(t) +\varepsilon_{n}dt\Big) \bigg] \circ dU( t) .
\end{eqnarray*}
\noindent 
Together with (\ref{Gcl}), this leads to the following result for quantum diffusions: 

\smallskip

\begin{theorem}
\label{thm:dephasing_diffusion}
Consider a maximally dephasing QMS, with stable basis $\left\{
|n\rangle :n\right\} $, represented by $G\sim \left(\mathbf{S}=\Eye,\mathbf{L},H\right) $, 
with coupling operators $L_{k}\equiv \sum_{n}\lambda_{k,n}\,|n\rangle \langle n|$ and Hamiltonian 
$H\equiv \sum_n \varepsilon_n |n\rangle\langle n|$. Then the QMS admits an essentially 
classical diffusive dilation
if and only if the symplectic areas 
$A_{nm} =\mathrm{Im} \langle \boldsymbol{\mathbf{\lambda}}_{n}| \boldsymbol{ \mathbf{\lambda}} _{m} \rangle $ vanish for each pair $n,m$. 
In this case we have
\begin{eqnarray}
U(t) = \sum_n | n \rangle \langle n | \otimes e^{- \frac{1}{2} \sigma_n^2 t} \, e^{-i \left( Q_n (t) + \varepsilon _n t \right) },
\end{eqnarray}
where the $Q_n$ are independent, compatible quantum Wiener processes with variances
$\sigma_n^2 = \langle \boldsymbol{\mathbf{\lambda}}_n | \boldsymbol{\mathbf{\lambda}}_n \rangle = \sum_k | \lambda_{k,n} |^2$.
\end{theorem}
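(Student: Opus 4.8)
The plan is to prove the equivalence in two directions, leveraging the structural results already established. First I would handle the easy direction: suppose the QMS admits an essentially classical \emph{diffusive} dilation. By Theorem \ref{thm:KM} (with the jump set $J$ empty, since we are told the dilation is purely diffusive), the generator has the form $\mathscr{L} = \sum_{k} \mathscr{L}^{\mathrm{diff}}_{R_k} + \mathscr{L}^{\mathrm{det}}_H$ with all $R_k$ self-adjoint. Thus there is a representation of the generator in which all coupling operators $L_k = e^{i\theta_k} R_k$ are, up to a global phase, self-adjoint; equivalently, multiplying by the phase, the inner products $\langle \boldsymbol{\mathbf{\lambda}}_n | \boldsymbol{\mathbf{\lambda}}_m\rangle$ become real in that representation. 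Since by Lemma \ref{lem:obstruction} (applied pairwise, or more directly by the computation in its proof showing how $A_{nm}$ transforms) the symplectic areas $A_{nm}$ are invariant under Euclidean equivalence up to the correction term that vanishes when the $\boldsymbol{\mathbf{\lambda}}_n$ are all translated by the same vector — here I should be slightly careful, as $A_{nm}$ itself is \emph{not} Euclidean-invariant, only the triangle areas $\Delta_{nml}$ are. So the cleaner argument is: in the diffusive representation $A_{nm} = \mathrm{Im}\langle e^{-i\theta}\boldsymbol{\mathbf{\lambda}}_n, e^{-i\theta}\boldsymbol{\mathbf{\lambda}}_m\rangle = 0$ directly, because the components are real after stripping the common phase; hence the $z_{nm}$ of Lemma \ref{lem:PXP} are what they are, and matching this representation's $A_{nm}$ to the original requires the original $A_{nm}$ to vanish too — or rather, I would simply observe that the \emph{physical} dephasing frequencies $\omega_{nm}$ in \eqref{eq:dephase_frequency} are representation-independent, and in a diffusive representation they reduce to $\omega_{nm} = \varepsilon_m - \varepsilon_n$, i.e. they come from a Hamiltonian, which forces $\Delta_{nml} = 0$ for all triples, and then by the Remark following Theorem \ref{thm:obstruct_normal} we may pass to a representation in which \emph{all} $\langle \boldsymbol{\mathbf{\lambda}}_n | \boldsymbol{\mathbf{\lambda}}_m\rangle$ are real, i.e. $A_{nm} = 0$ in that representation.

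For the converse — the substantive direction — suppose $A_{nm} = \mathrm{Im}\langle \boldsymbol{\mathbf{\lambda}}_n | \boldsymbol{\mathbf{\lambda}}_m\rangle = 0$ for every pair. By Theorem \ref{thm:no_obstr_iff_selfadjoint} (whose hypothesis $\Delta_{nml}=0$ is implied, indeed subsumed, by $A_{nm}\equiv 0$), we may pass by a unitary transformation on the multiplicity space $\mathfrak{K}$ to a representation in which all coupling operators are self-adjoint, $L_k = R_k = \sum_n r_{k,n}|n\rangle\langle n|$ with $r_{k,n}$ real, and $H = \sum_n \varepsilon_n |n\rangle\langle n|$. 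I would then exhibit the dilation explicitly: set $W_k(t) = i B_k(t)^\ast - i B_k(t)$, which is a standard Wiener process in the Fock vacuum, take $\theta_k = 0$, and form $G^{\mathrm{classical}} \sim \big(\boxplus_k G^{\mathrm{diff}}_{R_k,0}\big)\boxplus G^{\mathrm{det}}_H$ via \eqref{concat}. By Theorem \ref{thm:KM} this is essentially classical, and it reproduces $\mathscr{L}$. It remains to solve the QSDE \eqref{dG} explicitly. Since $dG(t) = \sum_n |n\rangle\langle n|\otimes\{-i\,dQ_n(t) + \kappa_n\,dt\}$ with $Q_n(t) = i\sum_k\{\lambda_{k,n}B_k(t)^\ast - \lambda_{k,n}^\ast B_k(t)\} = \sum_k r_{k,n} W_k(t)$ (real coefficients!) and $\kappa_n = -\tfrac12\sigma_n^2 - i\varepsilon_n$, the germ is block-diagonal in the stable basis and, crucially, the commutators $[Q_n(t),Q_m(s)] = 2i A_{nm}\min(t,s) = 0$ vanish by hypothesis, so the processes $Q_n$ are jointly compatible. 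Integrating each scalar-valued QSDE $dV_n = (-i\,dQ_n + \kappa_n\,dt)V_n$ using $(dQ_n)^2 = \sigma_n^2\,dt$ gives $V_n(t) = e^{-\frac12\sigma_n^2 t}e^{-i(Q_n(t)+\varepsilon_n t)}$, whence $U(t) = \sum_n |n\rangle\langle n|\otimes V_n(t)$, which is the claimed formula.

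The main obstacle I anticipate is not any single hard estimate but rather the careful bookkeeping around \emph{which} quantities are Euclidean-invariant. The triangle areas $\Delta_{nml}$ are invariant (Lemma \ref{lem:obstruction}), but the pairwise areas $A_{nm}$ are \emph{not}; so the statement "$A_{nm}$ vanish for each pair" must be understood as a property of the \emph{given} representation $G\sim(\Eye,\mathbf{L},H)$, and the theorem is really asserting that this vanishing for the given diagonal representation is equivalent to existence of \emph{some} diffusive classical dilation. I would therefore take care to note that once $\Delta_{nml} = 0$ (which follows from $A_{nm}\equiv 0$ for the given representation, since $\Delta_{nml} = A_{nm}+A_{ml}+A_{ln}$), the Remark after Theorem \ref{thm:obstruct_normal} guarantees we can move to a self-adjoint representation, and that representation is the one feeding the K\"ummerer--Maassen diffusive construction. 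Conversely, the existence of a diffusive dilation pins down $\omega_{nm} = \varepsilon_m-\varepsilon_n$ as Hamiltonian-type, forcing $\Delta\equiv 0$ and hence, via the Remark, a representation with $A_{nm}\equiv 0$ — and since the given representation is already the canonical diagonal one, the vanishing transfers. The secondary technical point is the explicit integration of the QSDE, which is routine It\=o calculus once compatibility $[Q_n(t),Q_m(s)]=0$ is in hand, guaranteeing that the exponentials in different blocks do not interfere.
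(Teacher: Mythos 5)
Your constructive direction coincides with the paper's own argument: the paper likewise writes the germ as $dG(t)=\sum_n |n\rangle\langle n|\otimes\{-i\,dQ_n(t)+\kappa_n\,dt\}$, observes that each $Q_n$ is a vacuum Wiener process with $(dQ_n)^2=\sigma_n^2\,dt$, computes $[Q_n(t),Q_m(s)]=2iA_{nm}\min(t,s)$, and integrates blockwise to the displayed $U(t)$ once the $Q_n$ commute. Your extra detour through Theorem \ref{thm:no_obstr_iff_selfadjoint} to first make the $L_k$ self-adjoint is harmless (and makes the match with the K\"ummerer--Maassen normal form (\ref{kmt}) explicit), but it is not needed: $A_{nm}\equiv 0$ already makes the family $\{Q_n\}$ commutative, which is all the integration requires.

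The gap is in your ``only if'' direction. Having correctly observed that $A_{nm}$ is not Euclidean-invariant (only $\Delta_{nml}$ is), you deduce $\Delta\equiv 0$ from the existence of a diffusive dilation and then assert that ``the vanishing transfers'' to the given diagonal representation. It does not, in general: under a pure translation $L_k\mapsto L_k+\beta_k\Eye_N$ one has $A'_{nm}=A_{nm}-\sum_k\mathrm{Im}\{\beta_k^\ast(\lambda_{k,n}-\lambda_{k,m})\}$, so a representation Euclidean-equivalent to a self-adjoint one can perfectly well have $A'_{nm}\neq 0$ while remaining diagonal in the stable basis; the Remark after Theorem \ref{thm:obstruct_normal} only produces \emph{some} representation with real inner products, not the given one. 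The statement should therefore be read, as the paper's set-up makes clear, as a statement about the specific dilation $U_G$ generated by the given germ: that dilation is essentially commutative precisely when the noise processes $Q_n$ appearing in it form a commuting self-adjoint family, and the identity $[Q_n(t),Q_m(s)]=2iA_{nm}\min(t,s)$ makes $A_{nm}\equiv 0$ directly necessary --- no passage through $\Delta_{nml}$, the dephasing frequencies, or the Remark is required. With the ``only if'' direction rephrased this way your argument closes; as written, the final transfer step would fail for a translated (but still diagonal) choice of $\mathbf{L}$.
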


\subsection{Classical dilations via jumps}   The above theorem implies that vanishing of the 
obstruction serves as a witness to the classicality of an underlying diffusive dilation. The situation is 
more subtle if general classical dilations including Poisson processes are allowed. Recall the expression for
the coefficients $z_{nm}$ occurring in (\ref{eq:C_t}). These yield the dephasing rates, see (\ref{eq:dephase_rates}),
\begin{eqnarray}
\gamma_{nm} =\frac{1}{2} \sum_{j\in J}\nu _{j} |  e^{i \vartheta
_{j,n}}-e^{i\vartheta _{j,m}} |^2 +\frac{1}{2}\sum_{k\in
K} (r_{k,n}-r_{k,m})^{2} ,
\end{eqnarray}
and we obtain a similar expression for the dephasing frequencies $\omega_{nm}$, see (\ref{eq:dephase_frequency}).
We note that the obstruction will in this case be given by
\begin{eqnarray}
\label{eq:obstruction-phases}
\Delta_{nml} =  \sum_{j\in J}\nu _{j} \Big\{
 \sin( \vartheta_{j,m}-\vartheta _{j,n})+
 \sin( \vartheta_{j,l}-\vartheta _{j,m})+
 \sin( \vartheta_{j,n}-\vartheta _{j,l})
\Big\} .
\end{eqnarray}
Thus, the obstruction may be non-zero and this is entirely down to the Poissonian terms. 

\begin{remark}
By combining Theorem \ref{thm:dephasing_diffusion} with the above result, it follows that
(i) if a QMS is maximally dephasing, then vanishing of the Hamiltonian obstruction is {\em sufficient} 
(but {\em not} necessary) for an essentially classical dilation to exist;
(ii) if a QMS is maximally dephasing and essentially classical, a non-zero obstruction 
can {\em only} arise due to the presence of Poissonian noise.
\end{remark}

\section{Conclusion}

We have developed a notion of dephasing under the action of a QMS in terms of the convergence of operators to a block-diagonal form corresponding to irreducible invariant subspaces. An important special case is maximal dephasing, occurring when all such invariant subspaces are mutually orthogonal and one-dimensional. Our notion includes the definition of dephasing relative to a preferred energy basis, as formalized in \cite{AFGG12,AFG13}. 

We have characterized the maximal dephasing setting in detail, obtaining, in particular, the maximal rank for a maximally dephasing QMS. We further show that the phase component in the decay terms for off-diagonal elements of an operator evolving under such a QMS (that is, the imaginary parts of the generator's eigenvalues) need not come from a Hamiltonian, which we refer to as a Hamiltonian obstruction. QMSs which are maximally dephasing and free of obstruction are precisely those for which a representation of the generator solely in terms of self-adjoint operators exists.

A main motivating question for this work has been determining whether the evolution under a dephasing QMS may result from a dilation to a unitary stochastic dynamics with classical commutative noise, namely, an essentially classical dilation in K\"{u}mmerer and Maassen's terminology \cite{KM87}.  We have taken steps toward answering this question by employing the results developed for maximally dephasing QMSs to study their dilations by classical noise. We show that, remarkably, a diffusive dilation of such a QMS can occur {\em if and only if there is no Hamiltonian obstruction}. From this, we further establish that if a maximally dephasing QMS has non-zero obstruction and admits a classical dilation, then the obstruction can {\em only} arise from classical Poisson processes.

As a result of independent interest, we also show that any maximally dephasing QMS always admits a genuinely non-classical unitary dilation that possesses the same rank-one invariant subspaces. 

The present analysis leaves a number of open questions for future investigation.  Most importantly, it would be desirable, both conceptually and practically, to find stronger criteria that may be able to diagnose the existence of a \emph{general} classical unitary dilation - including both diffusive {\em and} Poisson processes - directly from the structure of the underlying dephasing QMS generator.  This analysis would also extend, to the case of continuous-time Markov dynamics, existing investigations on the extent to which discrete-time dephasing evolutions (phase-damping channels) may be represented in terms of classical random unitary dynamics \cite{Strunz}. In a similar venue, it is natural to ask whether a non-vanishing Hamiltonian obstruction (or a possible stronger measure) could be brought to bear on other non-classical dynamical traits - such as loss of positivity in appropriate quasi-probability distribution functions \cite{Nori}, generation of entanglement between the system and the environment \cite{Cywinski} or some observed fraction thereof, as relevant to understand the emergence of redundant information encoding through spectrum broadcast structures \cite{Horodecki}.  Lastly, extensions beyond the maximal dephasing setting (and, ultimately, beyond pure dephasing dynamics) are worth investigating in the light of connections with more general information-preserving structures \cite{IPS}. We plan to report on some of these issues in a separate study \cite{Next}.

\section*{Acknowledgements}
All authors wish to thank the Institut Henri Poincar\'{e} for the kind hospitality and support during 
the IHP Trimester \textit{Measurement and Control of Quantum Systems: Theory and Experiments}, 
Paris, France, 2018. HN gratefully acknowledges support from the Australian Research Council through 
Discovery Project DP130104191. LV also gratefully acknowledges partial support by the US National 
Science Foundation under Grant No. PHY-1620541.

\section*{References}

\end{document}